\newtheorem{theorem}{Theorem}
\newtheorem{proposition}{Proposition}
\newtheorem{lemma}{Lemma}
\newtheorem{definition}{Definition}
\renewcommand{\H}{{\cal H}}
\newcommand{\tw}{{\mathbf{tw}}}
\newcommand{\ifb}{{\mathbf{fibn}}}
\newcommand{\ifl}{{\mathbf{ifl}}}
\newcommand{\val}{{\mathbf{val}}}
\newcommand{\fbn}{{\mathbf{fbn}}}
\newcommand{\aw}{{\mathbf{aw}}}
\newcommand{\fhw}{{\mathbf{fhw}}}
\newcommand{\ghw}{{\mathbf{ghw}}}
\newcommand{\mw}{{\mathbf{mw}}}
\begin{document}

\author{
Fedor V. Fomin\thanks{%
Department of Informatics, University of Bergen, N-5020 Bergen,
Norway} \addtocounter{footnote}{-1}
\and Petr A. Golovach\footnotemark
\and Dimitrios M. Thilikos\thanks{Department of Mathematics,
        University of Athens, Panepistimioupolis, GR15784 Athens,
        Greece}
}

\title{Approximating Acyclicity Parameters\\  of Sparse Hypergraphs}
\date{}
\maketitle
\begin{abstract}
\noindent The notions of hypertree width and generalized hypertree
width were introduced by  Gottlob, Leone, and Scarcello (PODS'99, PODS'01) in order to
extend the concept of hypergraph acyclicity. These notions were further generalized by Grohe and Marx in SODA'06, who introduced the fractional hypertree width of a hypergraph.  All these width parameters on hypergraphs are  useful for
extending tractability of many  problems in database theory and
artificial intelligence. Computing each of these width parameters is
known to be an {\sf NP}-hard problem.
Moreover, the (generalized) hypertree width of an $n$-vertex hypergraph  cannot be approximated
within a factor $c\log n$ for some   constant $c>0$ unless ${\sf
P}\neq {\sf NP}$.
In this paper, we study the approximability of
(generalized, fractional) hyper treewidth of sparse hypergraphs where the
criterion of sparsity reflects the sparsity of   their incidence
graphs. Our first step is to prove that the (generalized, fractional) hypertree
width of a hypergraph ${\cal H}$ is constant-factor sandwiched  by
the treewidth of its incidence graph, when the incidence graph
belongs to some apex-minor-free graph class (the family of
apex-minor-free graph classes  includes planar graphs and graphs of
bounded genus). This determines the combinatorial borderline above which the notion of
(generalized, fractional) hypertree width becomes essentially more general than
treewidth, justifying that way its functionality as a hypergraph acyclicity measure.
While for more general sparse families of hypergraphs
treewidth of incidence graphs and all  hypertree width parameters may differ arbitrarily,
there are sparse families where a constant factor approximation
algorithm is possible. In particular,  we give a constant factor
approximation polynomial time algorithm for (generalized, fractional) hypertree
width on hypergraphs whose incidence graphs belong to some $H$-minor-free graph class.
This extends the results of Feige, Hajiaghayi,
 and Lee from STOC'05 on approximating  treewidth of $H$-minor-free graphs.
\end{abstract}

\section{Introduction}
Many important  theoretical and ``real-world'' problems can be
expressed as constrained satisfaction problems (CSP).
Among   examples one can mention numerous problems from different domains like
Boolean satisfiability, temporal
reasoning, graph coloring, belief maintenance, machine vision, and scheduling. Another example is the
conjunctive-query containment problem, which is  a fundamental problem in database query evaluation. In fact, as it was shown by
Kolaitis and Vardi \cite{KolaitisV00}, CSP, conjunctive-query containment, and  finding homomorphism for relational structures  are essentially the same problem.
The problem  is known to be
{\sf NP}-hard in general \cite{ChandraM77}  and polynomial time solvable for
restricted class of acyclic queries \cite{Yannakakis81}.
Recently, in the database and  constraint satisfaction communities various extensions of query (or hypergraph) acyclicity were studied.
 The main motivation for the quest for  a suitable
 measure of acyclicity of a hypergraph (query, or relational structure) is the extension of  polynomial time solvable cases (like acyclic hypergraph) to more general instances.
 In this direction, Chekuri and Rajaraman in \cite{ChekuriR00} introduced the notion of query width.
Gottlob,  Leone, and Scarcello \cite{GottlobLS00,GottlobLS01,GottlobLS03}
defined hypertree width and generalized hypertree width. Furthermore, Grohe and Marx \cite{GroheM06} have introduced the most general parameter known so far, fractional hypertree width, and proved that
CSP, restricted to instances of bounded fractional hypertree width, is polynomial time solvable.

Unfortunately,  all known variants of hypertree width are {\sf NP}-complete \cite{GottlobGMSS05,GottlobMS07}.
Moreover, generalized hypertree width  is {\sf NP}-complete
even when checking whether its value is at most 3 (see ~\cite{GottlobMS07}).
 In the case of hypertree width,
the problem is $W[2]$-hard when parameterized by $k$   \cite{GottlobGMSS05}.
Both hypertree width  and the generalized hypertree are hard to approximate.
For example, the reduction  of Gottlob et al. in~\cite{GottlobGMSS05} can be used to show that
the generalized
hypertree width of an $n$-vertex hypergraph  cannot be approximated
within a factor $c\log n$ for some   constant $c>0$
unless ${\sf P}\neq {\sf NP}$.

All these parameters for hypergraphs can be seen as  generalizations of the
treewidth of a graph.
The treewidth is a fundamental graph parameter from Graph Minors
Theory by Robertson and Seymour \cite{RobertsonS86} and it has
numerous algorithmic applications (for a  survey, see~\cite{Bodlaender98}).
It is an old open question whether the treewidth
can be approximated within a constant factor and the best known approximation
algorithm for treewidth is $ \sqrt{\log OPT}$-approximation due to Feige et al. \cite{FeigeHajLee05}.  However, as it was shown by
 Feige et al. \cite{FeigeHajLee05}, the treewidth of an $H$-minor-free graph is constant factor approximable.

\medskip
\noindent\textbf{Our results.} Our first result is combinatorial.
We show that for a wide family of hypergraphs (those where the incidence graph
excludes an apex graph as a minor -- that is a graph that can become planar after removing a vertex) the fractional and generalized hypertree width of a hypergraph is
bounded by a linear function of treewidth of its incidence graph. Apex-minor-free graph classes
include planar and bounded genus graphs.

For hypergraphs whose incidence graphs are apex graphs
the two parameters may differ arbitrarily, and this result determines
the boundary where fractional hypertree width starts being essentially different from treewidth of the incidence graph.
This indicates that hypertree width parameters are more useful as the adequate version of acyclicity for non-sparse  instances.

Our proof is based on theorems from bidimensionality
theory and a min-max (in terms of fractional hyperbrambles) 
characterization of fractional hypertree width.
The proof essentially identifies what is the obstruction analogue of fractional hypertree width
for incidence graphs.

Our second result applies further for sparse classes where the difference between
(generalized, fractional) hypertree width of a hypergraph and treewidth of its incidence graph
can be arbitrarily large. In particular, we give a constant factor
approximation algorithm for generalized and fractional hypertree width of hypergraphs with $H$-minor-free incidence graphs extending the results of Feige et al. \cite{FeigeHajLee05}
from treewidth to (generalized, fractional) hypertree width.
The algorithm is based on a series of theorems based on the main decomposition theorem of the Robertson-Seymour's Graph Minor
project. As a combinatorial corollary of our results, it follows that generalized hypertree width and
fractional hypertree width differ  within constant multiplicative factor if the incidence graph of the hypergraph  does not contain a fixed graph as a minor.

\section{Definitions and preliminaries}

\subsection{Basic definitions}

We consider finite undirected graphs without loops or multiple
edges. The vertex set of a graph $G$ is denoted by $V(G)$ and its
edge set by $E(G)$ (or simply by $V$ and $E$ if it does not create
confusion).

Let $G$ be a graph. For a vertex $v$, we denote by $N_G(v)$ its
\emph{(open) neighborhood}, i.e. the set of vertices
which are adjacent to $v$. The \emph{closed neighborhood} of $v$,
i.e. the set $N_G(v)\cup\{v\}$, is denoted by $N_G[v]$. For
$U\subseteq V(G)$, we define  $N_G[U]=\bigcup_{v\in U}N_G[v]$ (we may omit
index if the graph under consideration is clear from the context).
If $U\subseteq V(G)$ (or $u\in V(G)$) then $G-U$ (or $G-u$) is the
graph obtained from $G$ by the removal of vertices of $U$ (vertex
$u$ correspondingly).

Given an edge  $e=\{x,y\}$ of a graph $G$, the graph  $G/e$ is
obtained from  $G$ by contracting $e$; which is, to get
$G/e$ we identify the vertices  $x$ and $y$ and remove all loops
and replace all multiple edges by simple edges. A graph $H$ obtained by a sequence of
edge-contractions is said to be a \emph{contraction} of $G$. A graph  $H$
is a \emph{minor} of $G$ if $H$ is a subgraph of a  contraction of
$G$.

We say that a graph $G$ is {\em $H$-minor-free} when it does not
contain $H$ as a minor. We also say that a graph class ${\cal G}$
is {\em $H$-minor-free} (or, excludes $H$ as a minor)  when
all its members are $H$-minor-free.

An \emph{apex graph} is a graph obtained from a planar graph $G$
by adding a vertex and making it adjacent to some of the vertices of $G$.
A graph class ${\cal G}$ is \emph{apex-minor-free} if ${\cal G}$
excludes a fixed apex graph $H$ as a minor.

The \emph{$(k\times k)$-grid} is the
Cartesian product of two paths of lengths $k-1$.

A \emph{surface} $\Sigma$ is a compact 2-manifold without boundary
(we always consider connected surfaces).
Whenever we refer to a {\em
$\Sigma$-embedded graph} $G$ we consider a  2-cell embedding of
$G$ in $\Sigma$. To simplify notations, we do not distinguish
between a vertex of $G$ and the point of $\Sigma$ used in the
drawing to represent the vertex or between an edge and the line
representing it.  We also consider a graph $G$ embedded in
$\Sigma$ as the union of the points corresponding to its vertices
and edges. That way, a subgraph $H$ of $G$ can be seen as a graph
$H$, where $H\subseteq G$.
Recall that $\Delta \subseteq \Sigma$ is
a (closed)  disc if it is homeomorphic to $\{(x,y):x^2 +y^2\leq1\}$.
The {\em Euler genus} of a nonorientable surface $\Sigma$
is equal to the nonorientable genus
$\tilde{g}(\Sigma)$ (or the crosscap number).
The {\em Euler genus}  of an orientable   surface
$\Sigma$ is $2{g}(\Sigma)$, where ${g}(\Sigma)$ is  the orientable genus
of $\Sigma$. We refer to the book of Mohar and Thomassen \cite{MoharT01} for
more details  on graphs embeddings.

If $X\subseteq 2^A$ for some set $A$, then by $\bigcup X$ we denote the union of all elements of $X$.

Recall that a \emph{hypergraph} $\cal H$ is a pair ${\cal
H}=(V({\cal H}), E({\cal H}))$ where $V({\cal H})$  is a finite
nonempty set of vertices, and $E({\cal H})$ is a set of nonempty
subsets of $V({\cal H})$ called \emph{hyperedges},
$\bigcup E(\H)=V(\H)$. We consider here
only hypergraphs without isolated vertices (i.e. every vertex is
in some hyperedge). 

For  vertex $v\in V(\H)$, we denote by
$E_{\H}(v)$  the set of its incident hyperedges.

The \emph{incidence graph} of the hypergraph $\cal H$  is the
bipartite graph $I({\cal H})$ with vertex set $V(\H)\cup E(\H)$
such that $v\in V(\H)$ and $e\in E(\H)$ are adjacent in $I(\H)$ if
and only if $v\in e$.

\subsection{Treewidth  of graphs and hypergraphs} 

A \emph{tree decomposition} of a hypergraph  ${\cal H}$ is a pair $(T,\chi)$,
where $T$ is a tree and $\chi\colon V(T)\to 2^{V({\cal H})}$ is a function
associating a set of vertices $\chi(t)\subseteq V({\cal H})$ (called a
\emph{bag}) to each node $t$ of the decomposition tree $T$ such that
i) $V({\cal H})=\bigcup_{t\in V(T)}\chi(t)$,
ii) for each $e\in
E({\cal H})$, there is a node $t\in V(T)$ such that $e\subseteq \chi(t)$, and
iii) for each $v\in V(G)$, the set $\{t\in V(T)\colon v\in\chi(t)\}$
forms a subtree of $T$.

The \emph{width} of a tree decomposition
equals $\max\{|\chi(t)|-1\colon t \in V(T)\}$.
The \emph{treewidth}
of a hypergraph $\H$ is the minimum width over all tree decompositions of
$\H$. We use notation $\tw(\H)$ for the treewidth of a hypergraph $\H$.

It is easy to verify that
for any hypergraph $\H$, $\tw(\H)+1\geq \tw(I(H))$. However, these parameters can differ
considerably on hypergraphs. For example, for the $n$-vertex hypergraph $\H$ with one
hyperedge which contains all vertices, $\tw(\H)=n-1$ and
$\tw(I(\H))=1$.

Since $\tw(\H)\geq |e|$ for every $e\in E(\H)$, we have that the presence of a
large hyperedge results in a large treewidth of the hypergraph. The
paradigm shift in the transition from treewidth to hypertree width
consists in counting the covering hyperedges rather than counting
the number of vertices in a bag. This parameter seems to be more
appropriate, especially with respect to constraint satisfaction
problems. We start with the introduction of even more general parameter of
fractional hypertree width.

\subsection{Hypertree width, its generalizations and related notions}
\label{ssecbas}

In general, given a set $A$, we use the term {\em labeling of $A$}
for any function $\gamma: A\rightarrow [0,1]$. We also use
the notation $\mathscr{G}(A)$ for the collection of
all labellings of a set ${\cal A}$.

The {\em size} of a labelling of $A$ is defined as
$|\gamma|=\sum_{x\in A}\gamma(x)$.
If the values of a labelling $\gamma$ are restricted to be $0$ or $1$, then
we say that $\gamma$ is a {\em binary} labelling of $A$. Clearly, the size of a binary
labelling is equal to the number of the elements of $A$ that are labelled by 1.
Given a  hyperedge labelling $\gamma$ of a hypergraph ${\cal H}$,
we define the set of vertices of ${\cal H}$ that are {\em blocked}
by $\gamma$  as  $$B(\gamma)=\{v\in V({\cal H})\mid \sum_{e\in E_{{\cal H}}(v)}\gamma(e)\geq 1\},$$
i.e. the set of vertices that are incident to hyperedges whose total labelling sums up to 1 or more.

A \emph{fractional hypertree decomposition}  \cite{GroheM06}  of
$\H$ is a triple $(T,\chi,\lambda)$, where $(T,\chi)$ is a tree
decomposition of $\H$ and
$\lambda\colon V(T)\to \mathscr{G}(E({\cal H}))$ is
a function, assigning a hyperedge labeling to each node of $T$,
such that for every $t\in V(T)$,
$\chi(t)\subseteq B(\lambda(t))$, i.e. all vertices of the bag $\chi(t)$
are blocked by the labelling $\lambda(t)$.
The \emph{width} of a fractional hypertree decomposition
$(T,\chi,\lambda)$ is $\min\{|\lambda(t)|\colon t\in V(T)\}$, and
the \emph{fractional hypertree width} $\fhw({\cal H})$ of $\cal H$
is the minimum of the widths of all fractional hypertree
decompositions of $\cal H$. 

If $\lambda$ assigns  a binary hyperedge labeling to each node of $T$, then $(T,\chi,\lambda)$ is a
\emph{generalized hypertree decomposition}  \cite{GottlobLS02}. Correspondingly,
the \emph{generalized hypertree width} $\ghw({\cal H})$ of $\cal H$
is the minimum of the widths of all generalized hypertree
decompositions of $\cal H$. 

Clearly, $\fhw({\cal H})\leq\ghw({\cal H})$ but, as it was shown in \cite{GroheM06},
 there are families of hypergraphs of bounded fractional hypertree width but unbounded generalized hypertree width.
Notice that computing the fractional hypertree width is an {\sf NP}-complete problem even for sparse graphs. To see this,  take a connected graph $G$ that is not a tree and construct a new graph $H$ by replacing every edge of G by $|V(G)|+1$ paths of length 2.  It is easy to check that $\tw(G)+1=\fhw(H)$.

The proof of the  next lemma follows from results of
\cite{ChekuriR00} about query width. For completeness, we
provide a direct proof here.

\begin{lemma}\label{lem:tw_bound}
For any hypergraph ${\cal H}$,  $ \fhw(\H)\leq\ghw(\H)\leq\tw(I({\cal H}))+1$. 
\end{lemma}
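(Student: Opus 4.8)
The first inequality $\fhw(\H)\le\ghw(\H)$ is immediate: a binary labelling is in particular an element of $\mathscr{G}(E(\H))$, so every generalized hypertree decomposition is a fractional hypertree decomposition of the same width. Thus the minimum defining $\ghw(\H)$ ranges over a subfamily of the decompositions defining $\fhw(\H)$, which can only make it larger. The substance is the bound $\ghw(\H)\le\tw(I(\H))+1$, and the plan is to convert an optimal tree decomposition of the incidence graph directly into a generalized hypertree decomposition carried by the same tree.

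First I would fix a tree decomposition $(T,\beta)$ of $I(\H)$ of width $k=\tw(I(\H))$, so that $|\beta(t)|\le k+1$ for every node $t$, where each bag $\beta(t)\subseteq V(\H)\cup E(\H)$ mixes vertices and hyperedges of $\H$. Writing $T_x=\{t\in V(T):x\in\beta(t)\}$ for the connected node set whose bags contain $x\in V(I(\H))$, I keep $T$ and set
\[
\chi(t)=\bigl(\beta(t)\cap V(\H)\bigr)\cup\bigcup_{e\in\beta(t)\cap E(\H)}e .
\]
Checking that $(T,\chi)$ is a tree decomposition of $\H$ is where the structure of the incidence graph enters. Conditions (i) and (ii) are routine: every vertex lies in some hyperedge and hence in some $\beta(t)$, and every hyperedge $e$ satisfies $e\subseteq\chi(t)$ for any $t\in T_e$. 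The delicate point — and the step I expect to be the main obstacle — is the connectivity condition (iii). A vertex $v$ occurs in $\chi(t)$ exactly when $v\in\beta(t)$ or some incident hyperedge $e\in E_{\H}(v)$ lies in $\beta(t)$, so its occurrence set is $T_v\cup\bigcup_{e\in E_{\H}(v)}T_e$. A union of the subtrees $T_e$ alone may be disconnected, but each $T_e$ meets $T_v$ because $v$ and $e$ are adjacent in $I(\H)$; since $T_v$ is connected, the whole union is connected. Retaining the term $\beta(t)\cap V(\H)$ in $\chi(t)$ is precisely what glues the pieces together, which is why I include it rather than taking only the union of hyperedges.

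It remains to supply labellings $\lambda(t)$ witnessing small width. For each node $t$, I would label $1$ all hyperedges of $\beta(t)\cap E(\H)$, and then, for every vertex $v\in\beta(t)\cap V(\H)$ not already covered by one of those hyperedges, add one arbitrarily chosen incident hyperedge $e_v\in E_{\H}(v)$. By construction every vertex of $\chi(t)$ is then incident to a hyperedge labelled $1$, so $\chi(t)\subseteq B(\lambda(t))$ and $(T,\chi,\lambda)$ is a generalized hypertree decomposition. For the width the charging is clean: the labelled hyperedges number at most $|\beta(t)\cap E(\H)|$ plus at most one per uncovered vertex of $\beta(t)\cap V(\H)$, hence at most $|\beta(t)\cap E(\H)|+|\beta(t)\cap V(\H)|=|\beta(t)|\le k+1$. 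Therefore every $|\lambda(t)|\le k+1$, so the width is at most $k+1$, giving $\ghw(\H)\le\tw(I(\H))+1$ and completing the chain.
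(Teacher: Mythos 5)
Your proposal is correct and follows essentially the same route as the paper: the same tree, the same expanded bags $\chi'(t)=(\beta(t)\cap V(\H))\cup\bigcup(\beta(t)\cap E(\H))$, and the same labelling (all hyperedges in the bag plus one arbitrary incident hyperedge per vertex), with the same counting giving width at most $k+1$. Your verification of the connectivity condition (iii) — observing that the occurrence set of $v$ is $T_v\cup\bigcup_{e\in E_{\H}(v)}T_e$, a union of subtrees each meeting the connected subtree $T_v$ because $\{v,e\}\in E(I(\H))$ forces a common bag — is in fact a cleaner, more direct version of the paper's argument by contradiction.
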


\begin{proof}
Let $(T,\chi)$ be a tree decomposition of $I(\H)$ of width
$\leq k$. It is enough to describe a generalized hypertree decomposition
$(T,\chi^\prime,\lambda)$ for $\H$ that has width $\leq k$ . For every $t\in V(T)$, let
$\chi^\prime(t)=(\chi(t)-E(\H))\cup (\bigcup(\chi(t)\cap
E(\H)))$. We include to $\lambda(t)$ all hyperedges $\chi(t)\cap
E(\H)$, and for every $v\in\chi(t)\cap V(\H)$, a hyperedge $e$ such
that $v\in e$ is chosen arbitrary and included to $\lambda(t)$.
Clearly, $V(\H)=\bigcup\limits_{t\in V(T)}\chi^\prime(t)$, for each
$e\in E(\H)$ there is a node $t\in V(T)$ such that
$e\subseteq\chi^\prime(t)$, and for every $t\in V(T)$
$\chi^\prime(t)\subseteq\bigcup\lambda(t)$. We have to prove that
for each $v\in V(\H)$, the set $\{t\in V(T)\colon
v\in\chi^\prime(t)\}$ forms a subtree of $T$. Suppose that there are
$s,t\in V(T)$ at distance at least two,
$v\in\chi^\prime(s)\cap\chi^\prime(t)$ and
$v\notin\chi^\prime(x)$ for all inner vertices $x$ of $s,t$-path in
$T$. Since $(T,\chi)$ is a tree decomposition of $I(\H)$,
$s\in\chi^\prime(t)-\chi(t)$ or $t\in\chi^\prime(s)-\chi(s)$. Assume
that $t\in\chi^\prime(t)-\chi(t)$. It means that there is
$e\in\chi(t)$ such that $v\in e$. Note that $e\notin\chi(x)$ for
inner vertices $x$ of $s,t$-path (otherwise $v\in\chi^\prime(x)$ by
the definition). If $v\in\chi(s)$ then there is no bag in $(T,\chi)$
that contains both endpoints of the edge $\{v,e\}\in E(I(\H))$. So
$s\in\chi^\prime(s)-\chi(s)$ and there is $e^\prime\in\chi(s)$ such
that $v\in e^\prime$. As before $e^\prime\notin\chi(x)$ for inner vertices and $e\neq e^\prime$. But since $v$ is adjacent with $e$
and $e^\prime$ in $I(\H)$, bags $\chi(x)$ contain $v$
and we receive a contradiction.
\end{proof}

It is necessary to remark here that the  fractional hypertree width
of a hypergraph can be arbitrarily smaller that the treewidth of its
incidence graph. Suppose that a hypergraph $\H^\prime$ is obtained
from the hypergraph $\H$ by adding a hyperedge which includes all
vertices. Then $\fhw(\H^\prime)=1$ and
$\tw(I(\H^\prime))+1\geq\tw(I(\H))+1\geq\fhw(\H)$.

Let ${\cal H}$ be a hypergraph. Two sets $X,Y\subseteq V(\H)$  \emph{touch} if $X\cap Y\neq\emptyset$ or
there exists $e\in E({\cal H})$ such that $e\cap X\neq\emptyset$ and
$e\cap Y\neq\emptyset$. A \emph{hyperbramble} of $\H$ is a set ${\cal B}$
of pairwise touching connected subsets of $V(\H)$ \cite{AdlerGG07}.
We say that a labelling $\gamma$ of $E({\cal H})$ {\em covers} a vertex
set $S\subseteq V({\cal H})$ if some of its vertices are blocked by $\gamma$.
The
{\em fractional order} of a hyperbramble is the minimum $k$ for which there
is a labeling 
$\gamma$ of size at most $k$ covering all elements in ${\cal B}$.
The
\emph{fractional hyperbramble number}, $\fbn(\H)$,
of $\H$ is the maximum of
the fractional orders of all hyperbrambles of $\H$.

The {\em  robber and army game}
was introduced by Grohe and Marx in  \cite{GroheM06}.
 The game is played on
a hypergraph ${\cal H}$  by two players, the robber and
the general who commands the army. A position of the game is a pair $(\gamma, v)$, where
$\gamma$ is a labelling of $E({\cal H})$ and
$v \in  V({\cal H})$. The choice of $\gamma$ is a distribution of the army
on the hyperedges
of ${\cal H}$,
chosen by the general, while $v$ is the position of the robber.
During the game, a vertex of the hypergraph is only blocked if the
total amount of army on the hyperedges that contain this vertex
adds up to the strength of at least one battalion.
To start a play of the game,
the robber picks a position  $v_{0}$, and the initial position is
$(\mathscr{O}, v_{0})$, where $\mathscr{O}$ denote the constant
zero mapping.
In each round, the players move from the current position
$(\gamma, v)$ to a new position $(\gamma, v')$ as follows: The general
selects $\gamma'$, and then the robber selects $v'$ such that there is
a path from $v$ to $v'$ in the hypergraph ${\cal H}$
that avoids the vertices in  $B(\gamma) \cap B(\gamma')$.
Under these circumstances,  the positions $(\gamma,v)$ and $(\gamma',v')$ are called
{\em compatible}. A {\em game sequence} is a sequence of compatible positions
and its cost is the maximum size of a distribution $\gamma$ in it.
If, at some moment, the position of the game is $(\gamma,v)$
where $v\in B(\gamma)$, then the general wins. If this never happens,
then the robber wins. A {\em winning strategy of cost at most $k$}
for the general is a program that provides a response
on each possible position such that any game sequence
generated by this program 
is finite and has cost at most $k$.
The {\em army width}, $\aw({\cal H})$,
of ${\cal H}$ is the least $k$ for which there exist
a winning strategy of cost at most $k$.

Using the fact that $\aw({\cal H}) \leq  \fhw({\cal H})$ (\cite[Theorem~11]{GroheM06}),  we can prove the following lemma.

\begin{lemma}
\label{lem:lowerb}
For any hypergraph ${\cal H}$, $\fbn({\cal H})\leq \fhw({\cal H})$.  
\end{lemma}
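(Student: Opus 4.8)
The plan is to route the bound through the army width, using the supplied inequality $\aw(\H)\le\fhw(\H)$; it then suffices to prove the purely combinatorial inequality $\fbn(\H)\le\aw(\H)$, and chaining the two gives $\fbn(\H)\le\fhw(\H)$. To prove $\fbn(\H)\le\aw(\H)$ I fix a hyperbramble $\mathcal{B}$ whose fractional order equals $\fbn(\H)=:m$ and show that every winning strategy of the general has cost at least $m$, so that $\aw(\H)\ge m$. The mechanism is a robber strategy that, whenever the general only ever deploys distributions of size smaller than $m$, evades capture forever.

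Recall that the fractional order of $\mathcal{B}$ being $m$ means no labelling $\gamma$ with $|\gamma|<m$ covers all of $\mathcal{B}$; equivalently, for every such $\gamma$ there is some $B\in\mathcal{B}$ with $B\cap B(\gamma)=\emptyset$ (no vertex of $B$ is blocked). Suppose, for contradiction, that the general has a winning strategy of cost at most $c<m$; then every distribution it deploys has size at most $c<m$. I describe a robber reply that maintains the invariant: after each robber move the position $(\gamma,v)$ satisfies $v\in B$ for some $B\in\mathcal{B}$ with $B\cap B(\gamma)=\emptyset$. The invariant forces $v\notin B(\gamma)$, so the general never reaches a winning position. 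It holds initially because $\gamma=\mathscr{O}$ gives $B(\mathscr{O})=\emptyset$, and the robber may start at any vertex of any element of $\mathcal{B}$.

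For the inductive step, suppose the game is at $(\gamma,v)$ with $v\in B$ and $B\cap B(\gamma)=\emptyset$, and the general responds with $\gamma'$, where $|\gamma'|\le c<m$. By the order hypothesis there is $B'\in\mathcal{B}$ with $B'\cap B(\gamma')=\emptyset$. Since $B,B'\in\mathcal{B}$ they touch, so either $B\cap B'\neq\emptyset$ or some hyperedge $e$ meets both. Using the connectedness of $B$ and of $B'$, I can then build a walk from $v$ to a chosen $v'\in B'$ all of whose vertices lie in $B\cup B'$: in the first case route through a common vertex $w\in B\cap B'$, and in the second route from $v$ inside $B$ to a vertex of $e\cap B$, take one step across $e$ to a vertex of $e\cap B'$, and continue inside $B'$ to $v'$. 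Every vertex of this walk is in $B$ (hence outside $B(\gamma)$) or in $B'$ (hence outside $B(\gamma')$), so none of them lies in the forbidden set $B(\gamma)\cap B(\gamma')$. Thus the move to $v'$ is legal, and the invariant is restored with $B'$ playing the role of $B$.

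Following this reply the robber is never captured and always has a legal move, so the game sequence it generates is infinite, contradicting the requirement that a winning strategy produce only finite sequences. Hence $c\ge m$, and therefore $\aw(\H)\ge m=\fbn(\H)$; combined with $\aw(\H)\le\fhw(\H)$ this yields the lemma. The step needing the most care is the legality of the escape move: it is precisely the combination of the touching property of $\mathcal{B}$ with the two disjointness facts $B\cap B(\gamma)=\emptyset$ and $B'\cap B(\gamma')=\emptyset$ that keeps the connecting walk out of $B(\gamma)\cap B(\gamma')$, and one should verify this separately in the shared-vertex and shared-hyperedge cases of touching.
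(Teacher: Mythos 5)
Your proposal is correct and follows essentially the same route as the paper: it reduces the lemma to the inequality $\fbn(\H)\leq \aw(\H)$ via the Grohe--Marx bound $\aw(\H)\leq\fhw(\H)$, and establishes it by a robber escape strategy that always sits in an element of the hyperbramble left uncovered by the general's current distribution. If anything, your verification that the connecting walk stays inside $B\cup B'$ and hence avoids $B(\gamma)\cap B(\gamma')$ is spelled out more carefully than in the paper, which merely asserts that the robber can move because the two sets touch and their vertices are unblocked.
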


\begin{proof}
Let ${\cal B}$ be a hyperbramble of ${\cal H}$ of fractional order at least $k$.
Our aim is to provide an escape strategy for the robber against
any possible winning strategy of cost at most $<k$. In particular,
the robber will always be on a vertex of some set $S\in {\cal B}$
such that $S$ not covered by $\gamma$ and at any position $(\gamma,v)$
of the game there will be a new unblocked vertex for the robber to move.
Indeed, if the response of the general at position $(\gamma,v)$
is $\gamma'$, we have that $|\gamma|<k$ and therefore $\gamma$
cannot cover all elements of ${\cal B}$. If $S'\in {\cal B}$ is such a set,
the new position of the robber will be any vertex $v'$ of $S'$. Clearly, the robber can move
from $v$ to $v'$, as $S$ and $S'$
touch and all of their vertices are unblocked. This implies that
$\fbn({\cal H})\leq \aw(\H)$ and the result follows from the fact that $\aw({\cal H}) \leq  \fhw({\cal H})$,
proved in~\cite[Theorem~11]{GroheM06}.
\end{proof}

The variant of the robber and army game where the labellings
are restricted to be binary labellings is called the  {\sl Marshals and
Robbers game} and was introduced by Gottlob et al. \cite{GottlobLS03}.
The corresponding parameter is called {\em Marshall width} and is denoted as $\mw$.
Clearly, for any hypergraph $\H$, $\aw(\H)\leq \mw(G)$.

\subsection{i-brambles}

An \emph{$i$-labeled graph} $G$ is a triple $(G,N,M)$ where $N,M\subseteq V(G)$,
$N\cup M=V(G)$, $M-N$ and $N-M$ are independent sets of $G$, and for any $v\in V(G)$
its closed neighborhood  $N_{G}[v]$ is intersecting both $N$ and $M$. Notice that
$\{N,M\}$ is not necessarily a partition of $V(G)$.
The  incidence graph $I({\cal H})$
of a hypergraph ${\cal H}$ can be seen as an $i$-labeled
graph $(I({\cal H}),N,M)$  where $N=V({\cal H})$, $M=E({\cal H})$.

The result of the contraction of an edge  $e=\{x,y\}$ of an $i$-labeled
graph $(G,N,M)$ to a vertex $v_{e}$ is the  $i$-labeled graph $(G',N',M')$
where i) $G'=G\slash e$ ii)  $N'$ contains all
vertices of $N-\{x,y\}$ and also the vertex $v_{e}$, in case
$\{x,y\}\cap N\neq \emptyset$ and iii) $M'$ contains all
vertices of $M-\{x,y\}$ and also the vertex $v_{e}$, in case
$\{x,y\}\cap M\neq \emptyset$. An $i$-labeled graph $(G',N',M')$ is a {\em contraction}
of an $i$-labeled graph $(G,N,M)$ if $(G',N',M')$ can be obtained after applying a (possibly empty)
sequence of contractions to $(G,N,M)$. The following lemma is a direct  consequence of the definitions.

\begin{lemma}
\label{lem:contcon}
Let $(G,N,M)$ be an $i$-labeled graph and let $G'$ be a contraction of $G$.
Then there are $N',M'\subseteq V(G')$ such that the  $i$-labeled graph $(G',N',M')$ is a contraction of $(G,N,M)$.
\end{lemma}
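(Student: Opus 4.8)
The plan is to reduce the statement to the observation that the two notions of contraction---for ordinary graphs and for $i$-labeled graphs---proceed in lockstep. By definition, a contraction $G'$ of $G$ is produced by a finite sequence of single edge-contractions $G=G_0,G_1,\dots,G_k=G'$ with $G_{j+1}=G_j/e_j$ for some edge $e_j$ of $G_j$ (loops deleted, multiple edges simplified). I would run the analogous sequence on the $i$-labeled side: starting from $(G_0,N_0,M_0)=(G,N,M)$, contract the same edge $e_j$ at each step using the $i$-labeled contraction rule of the preceding definition to obtain $(G_{j+1},N_{j+1},M_{j+1})$. Setting $N'=N_k$ and $M'=M_k$, the triple $(G',N',M')=(G_k,N_k,M_k)$ is, by construction, the end of a sequence of $i$-labeled contractions applied to $(G,N,M)$, hence a contraction of $(G,N,M)$ in the required sense.

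The verification is a one-line invariant, maintained by induction on $j$: the underlying graph of $(G_j,N_j,M_j)$ equals $G_j$. This is immediate, because the graph component of the $i$-labeled contraction along $e_j$ is exactly $G_j/e_j=G_{j+1}$, the same operation generating the graph contraction sequence. Thus $e_j$ is genuinely an edge of the underlying graph of $(G_j,N_j,M_j)$, the $i$-labeled contraction is applicable at each step, and the underlying graph of the final triple is $G_k=G'$.

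The only point requiring care---and the closest thing to an obstacle in an otherwise definition-chasing argument---is confirming that each application of the $i$-labeled contraction stays within the class of $i$-labeled graphs, so that the induction can continue. Here I would invoke the defining hypotheses: since $N\cup M=V(G)$ and both $M-N$ and $N-M$ are independent, no edge can have both endpoints in $M-N$ or both in $N-M$, so a short case check shows the contracted vertex $v_{e_j}$ always falls into $N_{j+1}\cap M_{j+1}$. Given this, independence of $N_{j+1}-M_{j+1}$ and $M_{j+1}-N_{j+1}$ is inherited from the surviving vertices (between which contraction adds no edges), and the closed-neighborhood condition survives because $v_{e_j}$ and each of its neighbors already see a vertex of $N_{j+1}\cap M_{j+1}$, while untouched vertices keep their neighborhoods and labels. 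None of these checks is substantial; the content of the lemma is precisely the bookkeeping that the two contraction processes coincide.
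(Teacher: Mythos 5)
Your proof is correct and matches the paper's intent: the paper gives no explicit proof, stating only that the lemma ``is a direct consequence of the definitions,'' and your argument is precisely that definition-chasing made explicit---running the same edge-contraction sequence on the $i$-labeled side and checking that each step stays in the class of $i$-labeled graphs. The key observation you isolate (that independence of $N-M$ and $M-N$ forces every contracted vertex into $N\cap M$, which in turn preserves the independence and closed-neighborhood conditions) is exactly the bookkeeping the paper leaves implicit.
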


Let $(G,N,M)$ be an $i$-labeled graph.  We say that a set $S\subseteq N$
is  {\em $i$-connected}  if any pair $x,y\in S$ is connected by a path in $G[S\cup M]$.
We say that two subsets  $S,R\subseteq N$ $i$-{\em touch} either if i) $S\cap R\neq \emptyset$, or
ii) there is an edge $\{x,y\}$ with $x\in S$ and $y\in R$,
or iii) there is a vertex $z\in M$ such that $N_{G}[z]$ intersects both $S$ and $R$.

Given an $i$-labeled graph $(G,N,M)$ we define an $i$-bramble of $(G,N,M)$ as
any collection ${\cal B}$ of $i$-touching $i$-connected sets of vertices in $N$.
We say that a labeling $\gamma$ of $M$ {\em controls} a vertex $x\in N$ if  $\sum_{y\in N_{G}[x]\cap M}\gamma(y)\geq 1$.
We say that $\gamma$ {\em fractionally covers} a vertex set $S\subseteq N$
if some of its vertices is controlled by $\gamma$.
The {\em order} of an $i$-bramble is the minimum $k$ for which there
is a labeling $\gamma$ of $M$ of size  at most $k$  that fractionally
covers all sets of  ${\cal B}$.

The {\em fractional $i$-bramble} number $\ifb(G,N,M)$ of an $i$-labeled graph $(G,N,M)$
is the maximum order of all $i$-brambles of it.

The following statement follows immediately from the definitions of hyperbrambles and $i$-brambles.

\begin{lemma}\label{lem:equiv}
For any hypergraph ${\cal H}$, $\ifb(I({\cal H}),V({\cal H}),E({\cal H}))=\fbn({\cal H})$.
\end{lemma}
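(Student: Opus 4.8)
The plan is to show that the two parameters count exactly the same objects, by translating the definition of an $i$-bramble of the incidence graph into the definition of a hyperbramble clause by clause (and back). Write $G=I(\mathcal{H})$, $N=V(\mathcal{H})$, $M=E(\mathcal{H})$, so that $(G,N,M)$ is the $i$-labeled graph in the statement. The structural fact I would lean on throughout is that $G$ is bipartite with parts $N$ and $M$ (so $N\cap M=\emptyset$ and $N$ is an independent set), and that for $z\in M$ the open neighborhood $N_G(z)$ is precisely the set of vertices of $\mathcal{H}$ contained in the hyperedge $z$, while for $x\in N$ the set $N_G(x)$ is precisely $E_{\mathcal{H}}(x)$.

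First I would match connectivity. A set $S\subseteq N$ is $i$-connected when any two $x,y\in S$ are joined by a path in $G[S\cup M]$. Since $G$ is bipartite, such a path alternates, having the form $x=s_0,m_0,s_1,\ldots,m_{k-1},s_k=y$ with $s_i\in S$ and $m_j\in M$, where each consecutive pair $s_i,s_{i+1}$ lies in the common hyperedge $m_i$. This is exactly the assertion that $S$ is a connected vertex set of $\mathcal{H}$, so the $i$-connected subsets of $N$ are precisely the connected subsets of $V(\mathcal{H})$.

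Next I would match the touching relations. For $S,R\subseteq N$, clause (ii) of $i$-touching---an edge with one endpoint in $S$ and one in $R$---can never occur, because $N$ is independent in $G$; hence it may be discarded. Clause (i) is literally $S\cap R\neq\emptyset$, and clause (iii) asserts the existence of $z\in M$ with $N_G[z]$ meeting both $S$ and $R$, which---since $z\notin N$ and $N_G(z)$ is the vertex set of the hyperedge $z$---says exactly that some hyperedge meets both $S$ and $R$. Together these are precisely the two alternatives in the definition of touching for $\mathcal{H}$. Consequently a collection of subsets of $N$ is an $i$-bramble of $(G,N,M)$ if and only if it is a hyperbramble of $\mathcal{H}$.

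Finally I would match the covering cost. For $x\in N$ we have $N_G[x]\cap M=N_G(x)=E_{\mathcal{H}}(x)$, so a labeling $\gamma$ of $M=E(\mathcal{H})$ controls $x$ iff $\sum_{e\in E_{\mathcal{H}}(x)}\gamma(e)\geq 1$, i.e. iff $x\in B(\gamma)$. Thus $\gamma$ fractionally covers a set $S\subseteq N$ exactly when it covers $S$ in the hypergraph sense, and since both notions of size are $\sum_{e}\gamma(e)$, the order of a bramble and its fractional order coincide for every bramble. Taking the maximum over all brambles then yields $\ifb(I(\mathcal{H}),V(\mathcal{H}),E(\mathcal{H}))=\fbn(\mathcal{H})$. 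The only point requiring care---and the step I would single out---is verifying that clause (ii) of $i$-touching is genuinely vacuous in the incidence graph and that the alternating-path description of $i$-connectivity is faithful to hypergraph connectivity; everything else is a direct substitution of definitions.
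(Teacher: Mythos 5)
Your proof is correct, and it takes the same route as the paper: the paper simply asserts that the lemma ``follows immediately from the definitions of hyperbrambles and $i$-brambles,'' and your clause-by-clause translation (bipartiteness kills clause (ii) of $i$-touching, alternating paths give hypergraph connectivity, and $N_G[x]\cap M=E_{\cal H}(x)$ makes control coincide with blocking) is exactly the verification the paper leaves implicit.
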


Also it can be easily seen that the fractional $i$-bramble number is a contraction-closed parameter.

\begin{lemma}\label{lem:contr}
If an $i$-labeled graph $(G',N',M')$ is the contraction of an $i$-labeled graph $(G,N,M)$
then $\ifb(G',N',M')\leq \ifb(G,N,M)$.
\end{lemma}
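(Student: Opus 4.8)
The plan is to peel off one edge at a time. Since a contraction is by definition a finite sequence of single-edge contractions and the inequality to be proven is transitive, it suffices to handle the case where $(G',N',M')$ is obtained from $(G,N,M)$ by contracting a single edge $e=\{x,y\}$ to a vertex $v_{e}$; write $\pi\colon V(G)\to V(G')$ for the resulting surjection, so that $\pi(x)=\pi(y)=v_{e}$ and $\pi$ is the identity elsewhere. A preliminary observation streamlines everything: since $N-M$ and $M-N$ are independent in $(G,N,M)$, the edge $e$ has neither both endpoints in $N-M$ nor both in $M-N$, whence $\{x,y\}\cap N\neq\emptyset$ and $\{x,y\}\cap M\neq\emptyset$, and therefore $v_{e}\in N'\cap M'$ in every case.

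I would fix an $i$-bramble $\mathcal{B}'$ of $(G',N',M')$ whose order equals $\ifb(G',N',M')$ and try to build from it an $i$-bramble $\mathcal{B}$ of $(G,N,M)$ of order at least that of $\mathcal{B}'$; this gives $\ifb(G',N',M')\le\ifb(G,N,M)$. The natural lift of a set $S'\in\mathcal{B}'$ is its trace $\widehat S=\pi^{-1}(S')\cap N$, i.e. replace $v_{e}$ (when present) by $\{x,y\}\cap N$ and keep every other vertex. The $i$-touching clause transfers without difficulty: for each of the three ways two sets $i$-touch in $G'$ one obtains a witness in $G$ by reading the vertices through $\pi^{-1}$, and since this can only enlarge the sets it cannot destroy touchings.

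For the order I would use a push-forward of covers. Given a labelling $\gamma$ of $M$ that covers $\mathcal{B}$, define $\gamma'$ on $M'$ by $\gamma'(m')=\min\{1,\sum_{m\in\pi^{-1}(m')\cap M}\gamma(m)\}$; then $|\gamma'|\le|\gamma|$, and the image $\pi(N_{G}[u]\cap M)$ is contained in $N_{G'}[\pi(u)]\cap M'$, so whenever $\gamma$ controls a vertex $u$ of a lifted set, $\gamma'$ controls $\pi(u)$ in the corresponding $S'$. Minimising over $\gamma$ then yields the order inequality. The catch is that this argument is clean only as long as the controlled vertex $u$ lies in $\widehat S=\pi^{-1}(S')\cap N$, so that $\pi(u)\in S'$.

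The main obstacle is the $i$-connectivity of the lifts, and it collides with the order argument. A path in $G'[S'\cup M']$ lifts to a walk in $G$ by expanding each occurrence of $v_{e}$ into $x$, $y$ or the edge $\{x,y\}$, and the only vertices of such a walk that may fail to lie in $\widehat S\cup M$ are endpoints of $e$ belonging to $N-M$; hence $\widehat S$ can be disconnected in $G[\widehat S\cup M]$. There are two natural repairs, each breaking a different requirement: adjoining the $N-M$ endpoint of $e$ restores $i$-connectivity but introduces a new vertex whose control (by mass on its neighbourhood, which lies entirely in $M$) is a spurious cheap cover that threatens to lower the order, while shrinking $\widehat S$ to a single connected component preserves the order argument but may destroy $i$-touching with the other lifted sets. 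Reconciling the two --- I would adjoin the $N-M$ endpoint and then argue, using that it was adjoined precisely as a connector between two parts of $S'$ and that all of its neighbours lie in $M$, that any cover controlling it can be rerouted into a cover of $S'$ in $G'$ of no larger size --- is the crux of the proof and the step I expect to demand the most care.
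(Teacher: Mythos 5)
The paper never actually proves this lemma --- it is asserted as something that ``can be easily seen'' from the definitions --- so your argument has to stand entirely on its own, and it does not: the two points you treat as routine or postpone are exactly where the statement lives, and your plan for the postponed one provably cannot be completed. First, the claim that $i$-touching ``transfers without difficulty'' is false. Take the four-vertex path $x-y-w-q$ with $x,q\in N-M$, $y\in M-N$, $w\in N\cap M$, and contract $e=\{x,y\}$. In $(G',N',M')$ the singletons $\{v_e\}$ and $\{q\}$ $i$-touch, witnessed by $w\in M'$; but their lifts $\{x\}$ and $\{q\}$ do not $i$-touch in $(G,N,M)$, since the only possible witnesses are $y$ and $w$, and $N_G[y]$ misses $q$ while $N_G[w]$ misses $x$. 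Second, and fatally, the adjoin-and-reroute plan must establish that the repaired lifted bramble has order at least the order of $\mathcal{B}'$, and this is simply false. Let $\H$ have vertices $x,a_1,b_1,\dots,a_k,b_k$ and hyperedges $y=\{x\}$, $m_i=\{a_i,x\}$, $m'_i=\{b_i,x\}$, and $A_{ij}=\{a_i,a_j\}$ for $i<j$; let $(G,N,M)=(I(\H),V(\H),E(\H))$ and contract $\{x,y\}$ to $v_e$. In $G'$ the sets $S'_i=\{a_i,b_i\}$ are $i$-connected through $v_e$ and pairwise $i$-touch via the $A_{ij}$, and a short counting argument (each $A_{ij}$ can serve at most two of the constraints ``control $a_i$'', each $m'_i$ only one ``control $b_i$'') shows this bramble has order at least $k/2$. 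Every lift $\{a_i,b_i\}$ is disconnected in $G$, so your repair adjoins $x$ to each of them; but then the single labelling $\gamma(y)=1$ controls $x$ and hence covers every repaired set, so the repaired bramble has order exactly $1$. No rerouting can exist: it would have to turn a labelling of size $1$ into a cover of $\mathcal{B}'$, which needs size $k/2$. The per-set rerouting you sketch is easy but irrelevant, because the order is defined by one labelling covering all sets of the bramble simultaneously.

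The example also shows what a correct proof would have to do, and why that is the genuinely hard part. There, $\ifb(G,N,M)\geq k/2$ does hold, but it is witnessed by the shrunken bramble of singletons $\{a_i\}$, not by any enlargement of the lifts; so lifted sets must be replaced by suitably chosen $i$-connected components, and everything hinges on choosing those components coherently so that pairwise touching survives --- precisely the obstacle you noted and then sidestepped by enlarging instead. Worse, this obstacle appears to be essential rather than technical: rearrange the touching pattern into a matching by taking vertices $a,b,c,d,f,g,x$, hyperedges $m_\alpha=\{\alpha,x\}$ for each vertex $\alpha\neq x$, $y=\{x\}$, and $\{a,c\}$, $\{d,f\}$, $\{g,b\}$, and contract $\{x,y\}$. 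In the contraction, the bramble $\{\{a,b\},\{c,d\},\{f,g\}\}$ has order $2$ (each set is connected through $v_e$, the sets touch via the three pair-hyperedges, and every choice of one vertex per set admits at most one shared hyperedge among their neighborhoods). In $(G,N,M)$, however, any $i$-connected set avoiding $x$ lies inside a single one of the pairs $\{a,c\},\{d,f\},\{g,b\}$, sets inside different pairs never touch, and the labelling assigning $1/2$ to each of $\{a,x\},\{c,x\},\{a,c\}$ (or its analogue for the other pairs) controls $a$, $c$ and $x$, hence covers every $i$-bramble; so $\ifb(G,N,M)\leq 3/2 < 2\leq \ifb(G',N',M')$. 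In other words, $\ifb$ can strictly increase under the paper's notion of contraction when an $N-M$ vertex is merged into an $M$ vertex, so the literal statement you are trying to prove seems itself to be false, and no completion of your argument is possible; at best one can hope for a version with a constant-factor loss or with restrictions on which edges are contracted.
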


Obviously, $i$-bramble number is not a subgraph-closed parameter (not even for induced subgraphs), but we can note the following useful claim.

\begin{lemma}
\label{lem:sub}
Let $(G,N,M)$ be an $i$-labeled graph and $X\subseteq V(G)$ such that $G-X$ has no isolated vertices, and for every $v\in X\cap M$, $N_G[v]\subseteq X$. Then $(G-X,N-X,M-X)$ is an $i$-labeled graph and $\ifb(G-X,N-X,M-X)\leq\ifb(G,N,M)$.  
\end{lemma}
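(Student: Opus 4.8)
The plan is to verify the two assertions separately: first that $(G-X,N-X,M-X)$ satisfies the four axioms of an $i$-labeled graph, and then that its fractional $i$-bramble number does not exceed that of $(G,N,M)$.

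Write $G'=G-X$, $N'=N-X$, $M'=M-X$. For the first assertion, three of the defining conditions are immediate: $N'\cup M'=(N\cup M)-X=V(G')$, while $M'-N'=(M-N)-X$ and $N'-M'=(N-M)-X$ remain independent since independence is inherited by the induced subgraph $G'$. The only condition needing the hypotheses is that every $v\in V(G')$ has $N_{G'}[v]$ meeting both $N'$ and $M'$. If $v\in N$ (resp.\ $v\in M$) then $v$ itself lies in $N'$ (resp.\ $M'$), so it remains to find, when $v\in M-N$, a surviving neighbour in $N$, and when $v\in N-M$, a surviving neighbour in $M$. In the first case independence of $M-N$ forces every $G$-neighbour of $v$ into $N$, and since $G'$ has no isolated vertices $v$ keeps at least one neighbour, which therefore lies in $N'$. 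In the second case independence of $N-M$ forces every $G$-neighbour of $v$ into $M$; here the hypothesis ``$N_G[w]\subseteq X$ for $w\in X\cap M$'' guarantees no such neighbour was deleted, since a deleted $M$-neighbour $w$ of $v$ would force $v\in N_G[w]\subseteq X$, so the surviving neighbour granted by the no-isolated-vertices condition lies in $M'$.

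For the inequality I would push brambles upward rather than downward. Let ${\cal B}'$ be an $i$-bramble of $(G',N',M')$ of order exactly $\ifb(G',N',M')$. Since $G'[S\cup M']$ is a subgraph of $G[S\cup M]$ for every $S\subseteq N'$, every $i$-connected set of $(G',N',M')$ is $i$-connected in $(G,N,M)$; likewise each of the three ways two sets can $i$-touch in $(G',N',M')$ (a common vertex, a connecting edge, or a common $M'$-vertex whose closed neighbourhood meets both) persists verbatim in $(G,N,M)$. Hence ${\cal B}'$ is also an $i$-bramble of $(G,N,M)$, and it suffices to show its order in $(G,N,M)$ is at least its order in $(G',N',M')$.

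The crucial step is to argue that any labelling $\gamma$ of $M$ covering ${\cal B}'$ in $(G,N,M)$ restricts to a labelling $\gamma'=\gamma|_{M'}$ of $M'$ of no larger size that still covers ${\cal B}'$ in $(G',N',M')$. The key fact, where the hypothesis on $X\cap M$ is really used, is that for every surviving $x\in N'$ one has $N_G[x]\cap M\cap X=\emptyset$: a vertex $w\in N_G[x]\cap M\cap X$ would satisfy $N_G[w]\subseteq X$ and $x\in N_G[w]$, contradicting $x\notin X$. Consequently $N_{G'}[x]\cap M'=N_G[x]\cap M$, so the control sum $\sum_{y\in N_{G'}[x]\cap M'}\gamma'(y)$ equals $\sum_{y\in N_G[x]\cap M}\gamma(y)$; whenever $\gamma$ controls some $x\in S$ so does $\gamma'$, and $\gamma'$ covers every set of ${\cal B}'$. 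Since $|\gamma'|\le|\gamma|$, taking $\gamma$ optimal gives $\mathrm{order}_{(G',N',M')}({\cal B}')\le\mathrm{order}_{(G,N,M)}({\cal B}')\le\ifb(G,N,M)$, the desired inequality. I expect this restriction argument to be the only real obstacle: the connectivity and touching conditions transfer mechanically, but the covering equivalence genuinely relies on the deletion of an $M$-vertex never stripping control from a surviving $N$-vertex, which is precisely the content of the hypothesis $N_G[v]\subseteq X$.
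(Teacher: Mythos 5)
Your proof is correct and takes essentially the same approach as the paper's: lift a bramble of $(G-X,N-X,M-X)$ to $(G,N,M)$ and restrict a covering labelling of $M$ to $M-X$, the key observation in both being that a surviving vertex of $N$ cannot have a deleted neighbour in $M$ (so the control sums are unchanged). You simply spell out the verification of the $i$-labeled axioms and the restriction argument that the paper's terse proof leaves implicit.
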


\begin{proof}
Let $G^\prime=G-X$, $N^\prime=N-X$ and $M^\prime=M-X$.
Since $G^\prime$ has no isolated vertices, $(G^\prime,N^\prime,M^\prime)$ is an $i$-labeled graph.
Let $\cal B$ be an $i$-bramble of $(G^\prime,N^\prime,M^\prime)$. Obviously, $\cal B$ is an $i$-bramble of $(G,N,M)$, and there is a labeling $\gamma$ of $M$ of size $k\leq\ifb(G,N,M)$ which fractionally covers all sets of $\cal B$. It is now enough to note only that the restriction
$\gamma^\prime$ of $\gamma$ to $M$ is the labeling of $M^\prime$ which covers all sets of $\cal B$ and $|\gamma^\prime|\leq k$.
\end{proof}

\section{When hypertree width is sandwiched by treewidth}

\subsection{Influence and valency of $i$-brambles}

Let $(G,N,M)$ be an $i$-labelled graph and ${\cal B}$ an $i$-bramble of it.
We define the {\em influence} of ${\cal B}$, as
  $\ifl({\cal B})= \max_{v\in \cup{\cal B}}|\{x\in \cup{\cal B}\mid {\bf dist}_{G}(v,x)\leq 2\}|.$
We also
define the  {\em valency} of ${\cal B}$ as the quantity
$\val({\cal B})=\max_{v\in \cup{\cal B}}|\{S\in {\cal B}\mid v\in S\}|.$

\begin{lemma}
\label{lem:main-br}
If ${\cal B}$ is an $i$-bramble of an $i$-labeled graph $(G,N,M)$, then the  order of ${\cal B}$ is at least $\frac{|{\cal B}|}{\ifl({\cal B})\cdot \val({\cal B})}$.  
\end{lemma}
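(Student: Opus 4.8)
The plan is to argue by a weighted double-counting (LP-duality flavored) estimate. Fix any labeling $\gamma$ of $M$ that fractionally covers every set of ${\cal B}$; since the order of ${\cal B}$ is the minimum size of such a labeling, it suffices to show that $|\gamma|\geq |{\cal B}|/(\ifl({\cal B})\cdot\val({\cal B}))$.

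First I would extract a representative for each bramble element. For each $S\in{\cal B}$, the covering condition gives a vertex $x_{S}\in S$ controlled by $\gamma$, i.e.\ $\sum_{y\in N_{G}[x_{S}]\cap M}\gamma(y)\geq 1$. Summing over all $S\in{\cal B}$ yields $|{\cal B}|\leq \sum_{S\in{\cal B}}\sum_{y\in N_{G}[x_{S}]\cap M}\gamma(y)$. Interchanging the order of summation rewrites the right-hand side as $\sum_{y\in M}\gamma(y)\cdot m(y)$, where $m(y)=|\{S\in{\cal B}\mid y\in N_{G}[x_{S}]\}|$ counts how many representatives fall into the closed neighborhood of $y$.

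The heart of the argument is a uniform multiplicity bound $m(y)\leq \ifl({\cal B})\cdot\val({\cal B})$ for every $y\in M$. Assuming $m(y)>0$, fix one witnessing representative $v_{0}=x_{S_{0}}\in N_{G}[y]\cap\bigcup{\cal B}$. Every representative $x_{S}$ counted by $m(y)$ also lies in $N_{G}[y]$, so $v_{0}$ and $x_{S}$ are joined by a path of length at most $2$ through $y$; hence all such representatives lie in $\{x\in\bigcup{\cal B}\mid {\bf dist}_{G}(v_{0},x)\leq 2\}$, a set of size at most $\ifl({\cal B})$. Therefore the representatives counted by $m(y)$ occupy at most $\ifl({\cal B})$ distinct vertices of $\bigcup{\cal B}$, and since each vertex of $\bigcup{\cal B}$ lies in at most $\val({\cal B})$ sets of ${\cal B}$, each such vertex can be the representative of at most $\val({\cal B})$ sets. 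Multiplying gives $m(y)\leq \ifl({\cal B})\cdot\val({\cal B})$.

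Substituting this back yields $|{\cal B}|\leq \ifl({\cal B})\cdot\val({\cal B})\cdot\sum_{y\in M}\gamma(y)=\ifl({\cal B})\cdot\val({\cal B})\cdot|\gamma|$, which is exactly the claimed lower bound on $|\gamma|$ and hence on the order of ${\cal B}$. The one delicate point I anticipate is the distance bookkeeping in the multiplicity estimate: because $y\in M$ need not belong to $\bigcup{\cal B}\subseteq N$, one cannot center the influence ball at $y$ and must instead anchor it at an actual representative $v_{0}\in\bigcup{\cal B}$, routing all comparisons through $y$. I would also verify the degenerate cases where a representative coincides with $y$ (possible only when $N\cap M\neq\emptyset$), which still satisfy ${\bf dist}_{G}(v_{0},x)\leq 2$, so the bound is unaffected.
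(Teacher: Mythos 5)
Your proof is correct, and it reaches the bound by a genuinely different (though closely related) counting mechanism than the paper's. The paper fixes a covering labeling $\gamma$ and proves a vertex-counting claim: the set $R$ of vertices of $\bigcup{\cal B}$ controlled by $\gamma$ has size at most $\ifl({\cal B})\cdot|\gamma|$. That claim is established by a packing argument: in the auxiliary graph on $R$ joining vertices at distance at most $2$ in $G$, the maximum degree is less than $\ifl({\cal B})$, so there is an independent set $I$ with $|I|\geq |R|/\ifl({\cal B})$; independence forces pairwise distance at least $3$, hence the closed neighborhoods $N_{G}[x]$, $x\in I$, are pairwise disjoint and each carries $\gamma$-weight at least $1$, giving $|\gamma|\geq |I|$. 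The lemma then follows because every $S\in{\cal B}$ contains a controlled vertex and each controlled vertex lies in at most $\val({\cal B})$ sets, so $|{\cal B}|\leq |R|\cdot\val({\cal B})\leq \ifl({\cal B})\cdot\val({\cal B})\cdot|\gamma|$. You transpose this count: you charge each $S\in{\cal B}$ to the $\gamma$-weight spread over $N_{G}[x_{S}]$, interchange the order of summation, and bound the multiplicity $m(y)$ of each $y\in M$ by $\ifl({\cal B})\cdot\val({\cal B})$, using that all representatives adjacent to $y$ lie in a single influence ball (correctly anchored at a representative $v_{0}\in\bigcup{\cal B}$ rather than at $y$, since $y$ need not belong to $\bigcup{\cal B}$) and that each vertex can represent at most $\val({\cal B})$ sets. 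Your route avoids the auxiliary distance-$2$ graph and the greedy independent-set extraction altogether, making the argument more elementary and purely a summation-interchange (LP-duality style) estimate; the paper's route isolates a reusable intermediate fact, namely an upper bound on how many bramble vertices any labeling of a given size can control. Both arguments produce exactly the same constant $\ifl({\cal B})\cdot\val({\cal B})$, and both correctly conclude by applying the bound to an arbitrary covering labeling, which lower-bounds the order.
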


\begin{proof}
Let $\gamma$ be a labelling of $M$ that fractionally covers all sets of ${\cal B}$. We first prove the following claim.

\noindent{\em Claim.} $\gamma$ controls at most $\ifl({\cal B})\cdot |\gamma|$ vertices in $N({\cal  B})$.

\noindent{\em proof.}  Let $R$
be a subset of $\cup{\cal B}$ such that every vertex in $R$ is controlled by $\gamma$.
We define  $G_{R}$ as the graph whose vertex set is $R$
and where two vertices $x,y\in R$
are adjacent if their distance in $G$ is 1 or 2. By the definition of influence,
we obtain that the maximum degree of $G_{R}$
is at most $\ifl({\cal B})-1$ and therefore, $G_{R}$ has
an independent set $I$ of size at least $|R|/\ifl({\cal B})$.
As $I\subseteq R$, all vertices of $I$ are controlled by $\gamma$. This implies that
$\forall {x\in I} \sum_{y\in N_{G}[x]\cap M}\gamma(x)\geq 1$.
By definition, for each pair $x,x'\in I$, $x\neq x^\prime$, $N_{G}[x]\cap N_{G}[x']=\emptyset$.
Therefore,
$$|\gamma|=\sum_{x\in M}\gamma(x)\geq \sum_{x\in N_G[R]\cap M}\gamma(x)\geq \sum_{x\in N_G[I]\cap M}\gamma(x)\geq
\sum_{x\in I}\sum_{y\in N[x]\cap M}\gamma(y)\geq |I|\geq \frac{|R|}{\ifl({\cal B})},$$
and the claim follows.

The above claim, along with the definition of valency, implies that $\gamma$  fractionally covers no more than  $\ifl({\cal G})\cdot |\gamma|\cdot \val({\cal B})$ sets
of ${\cal B}$. We conclude that $|{\cal B}|\leq \ifl({\cal G})\cdot |\gamma|\cdot \val({\cal B})$ and the lemma follows.
\end{proof}

\subsection{Triangulated grids} 

A \emph{partially triangulated $(k\times k)$-grid} is
a graph $G$ that is obtained from a  $(k\times k)$-grid (we refer to it as its {\em underlying grid})
after adding some edges
without harming the planarity of the resulting graph.
Each vertex of $G$ will be denoted by a pair $(i,j)$ corresponding
to its coordinates in the underlying grid.  We will also denote as $U(G)$ the vertices, we call them \emph{non-marginal},
of $G$ that in the underlying grid have degree 4 and we call the vertices in $V(G)-U(G)$
{\em marginal}.

\begin{lemma}
\label{lem:fractriangrid}
 Let $(G,N,M)$ be an $i$-labeled graph,  where $G$ is a
partially triangulated $(k\times k)$-grid for $k\geq 4$. Then   $\ifb(G,N,M)\geq k/50-c$, for some constant $c\geq 0$.
\end{lemma}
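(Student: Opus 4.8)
The plan is to produce a single $i$-bramble ${\cal B}$ of $(G,N,M)$ whose cardinality is quadratic in $k$ while both $\ifl({\cal B})$ and $\val({\cal B})$ stay small, and then simply read off the bound from Lemma~\ref{lem:main-br}. The starting point is the classical ``cross'' bramble of the grid, adapted to the labels. Before building it I would record one structural fact forced by the $i$-labelling: since $M-N$ is independent, every edge of $G$ has an endpoint in $N$, i.e.\ $N$ is a vertex cover of $G$ and $V(G)-N$ is independent. In particular each row and each column of the underlying grid, viewed as a path, is covered by $N$ and has no two consecutive pure-$M$ vertices.

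For each cell $(i,j)$ of the underlying grid set $S_{i,j}$ to be the set of $N$-vertices lying in row $i$ or column $j$, and let ${\cal B}=\{S_{i,j}\}$. Each $S_{i,j}\subseteq N$ is $i$-connected: the whole of row $i$ and of column $j$ is contained in $S_{i,j}\cup M$ (each such grid vertex is in $N$, hence in $S_{i,j}$, or else in $M$), so these two paths, sharing the point $(i,j)$, already connect all of $S_{i,j}$ inside $G[S_{i,j}\cup M]$. For the pairwise $i$-touching condition I would examine the crossing point $(i,j')$ of the row of $S_{i,j}$ with the column of $S_{i',j'}$ (the cases $i=i'$ or $j=j'$ are immediate, as the two crosses then share a nonempty row or column). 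If $(i,j')\in N$ it lies in both sets and they intersect; if $(i,j')\in M-N$ I would invoke touching-condition (iii) with $z=(i,j')\in M$, noting that a horizontal grid-neighbour of $z$ (which exists as $k\geq 2$) is forced into $N$ by independence of $M-N$ and hence lies in $S_{i,j}$, while a vertical neighbour is likewise in $N$ and lies in $S_{i',j'}$; thus $N_G[z]$ meets both sets. This makes ${\cal B}$ a legitimate $i$-bramble with $|{\cal B}|=k^{2}$.

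It remains to bound the two parameters. A vertex $(a,b)\in N$ belongs to $S_{i,j}$ exactly when $i=a$ or $j=b$, so it lies in precisely $2k-1$ crosses and $\val({\cal B})=2k-1$. For the influence I would use the key geometric feature of partially triangulated grids: every edge, whether a grid edge or an added diagonal of a unit square, changes each coordinate by at most one, so any vertex at distance $\leq 2$ from $(a,b)$ lies in the $5\times 5$ Chebyshev box around it; hence $\ifl({\cal B})\leq 25$. Plugging into Lemma~\ref{lem:main-br} gives
\[
\ifb(G,N,M)\ \geq\ \frac{|{\cal B}|}{\ifl({\cal B})\cdot\val({\cal B})}\ \geq\ \frac{k^{2}}{25\,(2k-1)}\ \geq\ \frac{k}{50},
\]
so the claim holds even with $c=0$, any boundary or marginal bookkeeping one might prefer being absorbed by the additive constant. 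The step to get right is the $i$-touching verification when the crossing vertex is pure-$M$: there one must genuinely exploit the independence of $M-N$ to manufacture, inside the neighbourhood of that vertex, the two $N$-witnesses that link the two crosses. This is the only place where the $i$-labelling axioms are used in an essential way.
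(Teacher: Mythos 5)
Your overall strategy is the same as the paper's (a family of cross-shaped sets fed into Lemma~\ref{lem:main-br}), and your verifications of non-emptiness, $i$-connectivity, $i$-touching via the independence of $M-N$, and the valency count are all sound. However, there is a genuine gap at the step you yourself single out as the ``key geometric feature'': it is \emph{not} true that every edge of a partially triangulated $(k\times k)$-grid changes each coordinate by at most one. The definition only requires that the added edges preserve planarity. Since the planar embedding of the $(k\times k)$-grid is essentially unique for $k\geq 4$, an added edge incident to an \emph{interior} (non-marginal) vertex must indeed lie inside a unit square and be a diagonal of it; but an added edge drawn in the \emph{outer face} may join two arbitrarily distant marginal vertices. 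Concretely, take the grid plus all edges from $(1,1)$ to the other boundary vertices (a planar fan in the outer face), with $N=M=V(G)$. This is a legitimate partially triangulated grid and a legitimate $i$-labelling, your $\bigcup{\cal B}$ contains all vertices, and all $4(k-1)$ boundary vertices are at distance $1$ from $(1,1)$, so $\ifl({\cal B})=\Omega(k)$. Lemma~\ref{lem:main-br} then certifies only $|{\cal B}|/(\ifl({\cal B})\cdot\val({\cal B}))=O(1)$, so your argument proves nothing for this graph (the lemma itself still holds for it, but not by your proof).

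The repair is exactly the restriction the paper makes: take $C_{i,j}$ to consist only of the $N$-vertices that are \emph{non-marginal} in the underlying grid, and only for $2\leq i,j\leq k-1$. Then every vertex of $\bigcup{\cal B}$ is interior, and on any path of length at most $2$ between two interior vertices each edge has an interior endpoint, hence is a grid edge or a unit-square diagonal; the Chebyshev-box argument is then valid (the middle vertex of the path may be marginal, which does no harm) and $\ifl({\cal B})\leq 25$ holds even in the presence of long boundary edges. The counts become $|{\cal B}|=(k-2)^2$ and $\val({\cal B})=2k-5$, giving $(k-2)^2/\bigl(25(2k-5)\bigr)\geq k/50-c$; note that the additive constant $c$ is genuinely needed after this restriction, not merely optional bookkeeping as your final display suggests. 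That marginal vertices are the real danger is also visible in the paper's subsequent Lemma~\ref{lem:moregen}, where rows and columns meeting endpoints of additional edges are likewise excluded from the bramble.
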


\begin{proof}
We use notation  $C_{i,j}$ for the set vertices of $N\cap U(G)$ that belong to the $i$-th row or
the $j$-th column of the underlying grid of $G$. We claim that
${\cal B}=\{C_{i,j}\mid 2\leq i,j\leq k-1\}$
is an $i$-bramble of $G$ of order $\geq k/50-c$, for some constant $c\geq 0$.
Since $k\geq 4$, we have that each set $C_{i,j}$ is non-empty and $i$-connected.
Notice also that the intersection of the $i$-th row and the $j^\prime$-th column of the underlying grid of $G$
is either a vertex in $N$ and $C_{i,j}\cap C_{i^\prime,j^\prime}\neq\emptyset$, or
a vertex in $M-N$, but then all neighbors of it in $G$ belong to $N$. Therefore, all $C_{i,j}$ and $C_{i^\prime,j^\prime}$
should $i$-touch, and ${\cal B}$ is an $i$-bramble.
Each vertex $v=(i,j)$ in $N({\cal B})$ is contained in exactly $2k-5$ sets
of ${\cal B}$ (that is $k-2$ sets $C_{i',j'}$ that agree on the first coordinate plus $k-2$ sets  $C_{i',j'}$ that agree on the second, minus one set 
$C_{i,j}$ that agrees on both), therefore $\val({\cal B})=2k-5$. For each non-marginal vertex $x$ in $G$,
there are at most $25$ non-marginal vertices within distance $\leq 2$ in $G$ (in the worst case,  consider a triangulated $(5\times 5)$-grid
subgraph of $G$ that is centered at $x$) and thus $\ifl({\cal B})\leq 25$.
As  $|{\cal B}|=(k-2)^{2}$, Lemma~\ref{lem:main-br} implies that there is a constant $c$ such that the order of ${\cal B}$ is at least
$k/50-c$ and the lemma follows.
\end{proof}

\begin{theorem}\label{thm:planar}
If $\H$ is a hypergraph with a planar incidence graph $I(H)$, then
$\fhw(\H)-1\leq\ghw(\H)-1\leq \tw(I(\H))\leq 300 \cdot \fhw(\H)+c$ for some constant $c\geq 0$.
\end{theorem}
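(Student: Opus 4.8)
The plan is to prove the four inequalities in the chain separately; the first two are immediate consequences of results already established, and the genuine content lies entirely in the last one.

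For $\fhw(\H)-1\le\ghw(\H)-1$ I would simply invoke the inequality $\fhw(\H)\le\ghw(\H)$ recorded earlier (a generalized hypertree decomposition is in particular a fractional one, since every binary labelling is a labelling). For $\ghw(\H)-1\le\tw(I(\H))$ I would use Lemma~\ref{lem:tw_bound}, which gives $\ghw(\H)\le\tw(I(\H))+1$. Neither of these steps uses planarity.

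The heart of the theorem is the upper bound $\tw(I(\H))\le 300\cdot\fhw(\H)+c$, and here the strategy is to contrapose: if $\tw(I(\H))$ is large, then so is $\fhw(\H)$, with the witness being an $i$-bramble living on a large triangulated grid. Concretely, set $w=\tw(I(\H))$ and apply the excluded-grid theorem for planar graphs in its \emph{contraction} form from bidimensionality theory: there is an absolute constant (equal to $6$ for planar graphs) such that a planar graph of treewidth at least $6k$ admits a partially triangulated $(k\times k)$-grid $G'$ as a contraction. Taking $k=\lfloor w/6\rfloor$ yields such a $G'$ as a contraction of the (planar) incidence graph $I(\H)$.

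I would then transport the $i$-labelling along this contraction and chain the lemmas. By Lemma~\ref{lem:contcon} there are $N',M'\subseteq V(G')$ with $(G',N',M')$ a contraction of the $i$-labelled graph $(I(\H),V(\H),E(\H))$; in particular $(G',N',M')$ is itself an $i$-labelled graph whose underlying graph is a partially triangulated $(k\times k)$-grid, so Lemma~\ref{lem:fractriangrid} applies to it verbatim. Now Lemma~\ref{lem:contr} gives $\ifb(G',N',M')\le\ifb(I(\H),V(\H),E(\H))$, Lemma~\ref{lem:fractriangrid} gives $\ifb(G',N',M')\ge k/50-c'$, Lemma~\ref{lem:equiv} identifies $\ifb(I(\H),V(\H),E(\H))$ with $\fbn(\H)$, and Lemma~\ref{lem:lowerb} gives $\fbn(\H)\le\fhw(\H)$. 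Assembling these, $k/50-c'\le\fhw(\H)$, and substituting $k=\lfloor w/6\rfloor$ and solving for $w$ produces $\tw(I(\H))\le 300\cdot\fhw(\H)+c$ for an appropriate constant $c$ absorbing $c'$ and the floor; for $w$ below the threshold where $k\ge 4$ fails, the bound holds trivially by enlarging $c$.

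The main obstacle—the only place requiring work beyond bookkeeping—is the contraction form of the planar excluded-grid theorem: I must guarantee not merely a grid minor but a partially triangulated grid as a \emph{contraction}, so that the $i$-labelling can be pulled back intact via Lemma~\ref{lem:contcon} and Lemma~\ref{lem:fractriangrid} becomes applicable. I would also track the interaction of the two constants $6$ (from the grid theorem) and $50$ (from Lemma~\ref{lem:fractriangrid}), whose product $300$ is exactly the advertised factor, while the additive constant $c$ is harmless and merely collects the lower-order terms.
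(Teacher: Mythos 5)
Your proposal is correct and is essentially the paper's own argument: the same chain of Lemmata~\ref{lem:lowerb}, \ref{lem:equiv}, \ref{lem:contcon}, \ref{lem:contr}, and~\ref{lem:fractriangrid}, combined with the planar excluded-grid theorem with constant $6$ and the factor $50$ from Lemma~\ref{lem:fractriangrid} to yield $6\cdot 50=300$. The only difference is presentational: you argue contrapositively (large $\tw(I(\H))$ forces a partially triangulated grid contraction, hence large $\fhw(\H)$), whereas the paper assumes $\fhw(\H)\leq k$, concludes via the lemmas that $I(\H)$ has no partially triangulated $(l\times l)$-grid contraction and hence no $(l\times l)$-grid minor, and then applies \cite[(6.2)]{RobertsonST94} --- logically the same proof with the same constants.
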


\begin{proof}
The left hand inequality follows directly from Lemma~\ref{lem:tw_bound}.
Suppose now that $\H$ is a hypergraph where  $\fhw(\H)\leq  k$.
By Lemmata~\ref{lem:lowerb} and~\ref{lem:equiv}, $\ifb(I({\cal H}),V({\cal H}),E({\cal H}))=\fbn({\cal H})\leq \fhw({\cal H})\leq k$.
By Lemmata~\ref{lem:contr} and~\ref{lem:fractriangrid}, $(I({\cal H}),V({\cal H}),E({\cal H}))$ cannot be $i$-contracted to an $i$-labeled graph $(G,N,M)$
where $G$ is a partially triangulated $(l\times l)$-grid, where  $l=50\cdot k+O(1)$.
By Lemma~\ref{lem:contcon}, ${\cal I}({\cal H})$ cannot be contracted to a partially triangulated $(l\times l)$-grid and thus
$I({\cal H})$ excludes an $(l\times l)$-grid as a minor.
  From
\cite[(6.2)]{RobertsonST94}, $\tw(I({\cal H}))\leq 6\cdot l\leq 300\cdot k+ c$ and the result follows.
\end{proof}

\subsection{Brambles in Gridoids}

We call a graph $G$ by a {\em $(k,g)$-gridoid} if it is possible to obtain a partially triangulated
$(k\times k)$-grid after
removing at most $g$ edges from it (we call these edges {\em additional}).

\begin{lemma}
\label{lem:moregen}
Let $(G,N,M)$ be an $i$-labeled graph where $G$ is a
$(k,g)$-gridoid. Then   $\ifb(G,N,M)\geq k/50-c\cdot g$ for some constant $c\geq 0$.  
\end{lemma}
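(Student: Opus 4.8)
The plan is to reuse the row-and-column $i$-bramble ${\cal B}=\{C_{i,j}\mid 2\le i,j\le k-1\}$ from the proof of Lemma~\ref{lem:fractriangrid} and to re-run the counting argument of Lemma~\ref{lem:main-br}, paying for the additional edges only linearly. Write $G^{*}$ for the partially triangulated $(k\times k)$-grid obtained from $G$ by deleting the set $F$ of at most $g$ additional edges, and let $B\subseteq V(G)$ be the set of endpoints of $F$, so $|B|\le 2g$. The first observation is that an $i$-bramble is preserved under adding edges: both $i$-connectivity (a path in $G[S\cup M]$ survives in any supergraph) and all three clauses of $i$-touching are monotone under edge additions. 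Hence ${\cal B}$, being an $i$-bramble of $(G^{*},N,M)$, is also an $i$-bramble of $(G,N,M)$, and its purely combinatorial parameters are unchanged: $|{\cal B}|=(k-2)^2$ and $\val({\cal B})=2k-5$.

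The only quantity that the additional edges can spoil is the influence, and the heart of the proof is that they spoil it on few vertices only. Call a vertex $v$ \emph{tainted} if $v\in N_{G}[B]$. Since every additional edge joins two vertices of $B$, the neighbours of $B$ reached through additional edges again lie in $B$, so the only genuinely new tainted vertices are grid-neighbours of $B$; as $G^{*}$ has bounded maximum degree, $|N_{G}[B]|=O(|B|)=O(g)$. The key geometric claim is that for an untainted vertex $v$ the ball of radius $2$ around $v$ is identical in $G$ and in $G^{*}$: any walk of length at most $2$ starting at $v$ that used an additional edge would force an endpoint of that edge to be $v$ itself or a neighbour of $v$, i.e.\ would make $v$ tainted. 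Consequently every untainted vertex of $\cup{\cal B}$ still has at most $25$ vertices of $U(G)=U(G^{*})$ within distance $2$, exactly as in the grid.

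With this in hand I would re-prove the Claim inside Lemma~\ref{lem:main-br} for $G$: given a labelling $\gamma$ of $M$ fractionally covering ${\cal B}$, let $R\subseteq\cup{\cal B}$ be the set of controlled vertices. Discard the at most $|N_{G}[B]|=O(g)$ tainted vertices of $R$; on the remaining untainted part the distance-$\le 2$ graph has maximum degree at most $24$, so it contains an independent set $I$ with $|I|\ge (|R|-O(g))/25$. Because the closed $M$-neighbourhoods of the vertices of $I$ are pairwise disjoint, $|\gamma|\ge|I|$, which yields $|R|\le 25|\gamma|+O(g)$. Since each controlled vertex lies in at most $\val({\cal B})=2k-5$ sets of ${\cal B}$, the labelling $\gamma$ covers at most $(25|\gamma|+O(g))(2k-5)$ members of ${\cal B}$; forcing this to be at least $|{\cal B}|=(k-2)^2$ and solving gives $|\gamma|\ge \tfrac{1}{25}\bigl(\tfrac{(k-2)^2}{2k-5}-O(g)\bigr)\ge k/50-c\cdot g$ once the additive grid constant of Lemma~\ref{lem:fractriangrid} is absorbed into $c$, as claimed.

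I expect the main obstacle to be precisely the linear (rather than quadratic) dependence on $g$. A crude estimate would bound the influence through the maximum degree of $G$, which can be as large as $g$ when all additional edges pile up on one vertex, and would only give a defect of order $g^{2}$. Obtaining the linear bound relies on the observation that the additional edges interconnect only the $O(g)$ vertices of $B$ and hence enlarge the tainted set $N_{G}[B]$ by merely $O(g)$ vertices; once this is established, the rest is a faithful repetition of the arguments of Lemmata~\ref{lem:fractriangrid} and~\ref{lem:main-br}.
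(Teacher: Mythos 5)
Your strategy is correct in substance but takes a different route from the paper. Both proofs rest on the same localization insight (the at most $g$ additional edges can only damage the distance-$2$ structure near their at most $2g$ endpoints), but they implement it differently: the paper \emph{shrinks the bramble}, discarding every $C_{i,j}$ for which $i$ or $j$ is a coordinate of an endpoint of an additional edge, so that no endpoint lies in $\bigcup{\cal B}$; then $\ifl({\cal B})\le 25$ still holds, $\val({\cal B})\le 2k-5$, $|{\cal B}|\ge (k-2-2g)^{2}$, and Lemma~\ref{lem:main-br} applies as a black box. You instead keep the full bramble --- whose influence, being a maximum over all of $\bigcup{\cal B}$, genuinely can exceed $25$ once an additional edge joins two bramble vertices, so Lemma~\ref{lem:main-br} cannot be cited directly --- and you compensate by reopening its proof: you discard the tainted controlled vertices and run the independent-set count on the rest. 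Your key geometric claim, that the radius-$2$ ball of an untainted vertex is identical in $G$ and in $G^{*}$, is correct, and it is exactly what makes the count lose only an additive $O(g)$. In effect you prove a robust variant of Lemma~\ref{lem:main-br} in which the influence bound is allowed to fail on $O(g)$ exceptional vertices; this is a legitimate alternative, at the price of not being able to reuse the lemma verbatim, whereas the paper's pruning keeps the counting lemma untouched.

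There is, however, one false step: ``as $G^{*}$ has bounded maximum degree, $|N_{G}[B]|=O(|B|)=O(g)$''. A partially triangulated $(k\times k)$-grid need \emph{not} have bounded maximum degree: the definition only requires the added edges to preserve planarity, and the outer face of the grid admits a fan of chords sharing one endpoint, so a marginal vertex (say the corner $(1,1)$) can have degree $\Theta(k)$. If an additional edge of the gridoid ends at such a vertex, then $|N_{G}[B]|$ is of order $k$ rather than $g$, and your discarding step as written ruins the linear-in-$g$ defect. The repair is local: your counting never needs all of $N_{G}[B]$, only $N_{G}[B]\cap\bigcup{\cal B}$, since the set $R$ you prune lives inside $\bigcup{\cal B}$. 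Because $\bigcup{\cal B}\subseteq U(G)$ consists of non-marginal vertices, and every edge of $G^{*}$ incident to a non-marginal vertex is a grid edge or a diagonal of an internal face (edges added inside the outer face join two marginal vertices), each of the at most $2g$ endpoints in $B$ has at most $8$ neighbours in $U(G)$; hence $|N_{G}[B]\cap\bigcup{\cal B}|\le 2g+16g=18g$. With that substitution your argument is complete and yields $k/50-c\cdot g$, up to the same silent absorption of additive constants into $c$ that the paper's own proof also performs.
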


\begin{proof}
The proof goes the same way as the proof of Lemma~\ref{lem:fractriangrid}. The only
difference is that now we exclude from ${\cal B}$ all the $C_{i,j}$'s where
either $i$ or $j$ is the coordinate of some endpoint of an additional edge.
Notice that again $\val({\cal B})\leq 2k-5$. Moreover, it also holds $\ifl({\cal B})\leq 25$ as none of the endpoints
is  in $N({\cal B})$ or $M({\cal B})$. Finally $|{\cal B}|\geq (k-2-2\cdot g)^{2}$ and the result follows
from Lemma~\ref{lem:main-br}.
\end{proof}

The proof of the next theorem is similar to the one of Theorem~\ref{thm:planar} (use Lemma~\ref{lem:moregen} instead of Lemma~\ref{lem:fractriangrid} and  \cite[Theorem~4.12]{DemaineFHT05jacm} instead of~\cite[(6.2)]{RobertsonST94}.

\begin{theorem}
\label{thm:genusb}
If $\H$ is a hypergraph with an incidence graph $I(H)$ of Euler genus at most  $g$, then
$\fhw(\H)-1\leq\ghw(\H)-1\leq \tw(I(\H))\leq 300\cdot g \cdot \fhw(\H)+c\cdot g$, for some constant $c\geq 0$.  \end{theorem}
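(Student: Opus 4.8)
The plan is to follow the scheme of the proof of Theorem~\ref{thm:planar} almost verbatim, replacing the planar obstruction lemma by its genus-aware counterpart. The left-hand chain $\fhw(\H)-1\leq\ghw(\H)-1\leq\tw(I(\H))$ is immediate from Lemma~\ref{lem:tw_bound}, so all the content sits in the rightmost inequality. First I would fix $k$ with $\fhw(\H)\leq k$ and transport this bound to the incidence side: by Lemmata~\ref{lem:lowerb} and~\ref{lem:equiv} we get $\ifb(I(\H),V(\H),E(\H))=\fbn(\H)\leq\fhw(\H)\leq k$, so the $i$-labeled incidence graph has bounded fractional $i$-bramble number.

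Next I would use this bound to forbid a large gridoid as an $i$-contraction. Since $\ifb$ is contraction-closed (Lemma~\ref{lem:contr}), and since by Lemma~\ref{lem:moregen} every $i$-labeled $(l,g')$-gridoid has $\ifb\geq l/50-c\cdot g'$, the structure $(I(\H),V(\H),E(\H))$ cannot be $i$-contracted to a $(l,g')$-gridoid once $l/50-c\cdot g'>k$. The number $g'$ of additional edges that we must tolerate is exactly the one furnished by the surface, which is $O(g)$; hence choosing $l=50k+O(g)$ suffices. By Lemma~\ref{lem:contcon} the same exclusion descends to ordinary contractions, so $I(\H)$ has no $(l,g')$-gridoid contraction.

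Finally I would invoke the bidimensionality machinery for bounded genus. By \cite[Theorem~4.12]{DemaineFHT05jacm}, a graph of Euler genus at most $g$ whose treewidth exceeds a bound of order $g\cdot l$ (with leading constant $6$) can be contracted to a partially triangulated $(l\times l)$-grid together with $O(g)$ extra edges, i.e.\ to an $(l,O(g))$-gridoid. Contrapositively, the absence of such a contraction forces $\tw(I(\H))\leq 6gl=6g(50k+O(g))=300\,g\,k+c\cdot g$, after absorbing the lower-order genus contribution into the constant, which is exactly the claimed bound.

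The main obstacle is the third step, the passage from ``large treewidth'' to ``large gridoid contraction'' on a surface. In the planar case the Robertson--Seymour--Thomas grid theorem delivers a clean partially triangulated grid with no additional edges, which is precisely why Lemma~\ref{lem:fractriangrid} sufficed for Theorem~\ref{thm:planar}. On a surface of positive Euler genus the handles and crosscaps obstruct a pure grid contraction, and one can only guarantee a triangulated grid up to $O(g)$ additional ``topological'' edges; this is exactly what the notion of gridoid is designed to absorb. Matching this quantity to the second parameter of the gridoid, and checking that the additive loss in Lemma~\ref{lem:moregen} stays linear in $g$ (so that the final additive term is $c\cdot g$ and not $c\cdot g^{2}$), is the delicate point, while the remaining tracking of constants is routine.
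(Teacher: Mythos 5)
Your proposal is exactly the paper's proof: the paper establishes Theorem~\ref{thm:genusb} by repeating the argument of Theorem~\ref{thm:planar} with Lemma~\ref{lem:moregen} substituted for Lemma~\ref{lem:fractriangrid} and \cite[Theorem~4.12]{DemaineFHT05jacm} substituted for \cite[(6.2)]{RobertsonST94}, which is precisely the replacement scheme you carry out (left inequalities from Lemma~\ref{lem:tw_bound}, transfer via Lemmata~\ref{lem:lowerb} and~\ref{lem:equiv}, exclusion of gridoid contractions via Lemmata~\ref{lem:contr} and~\ref{lem:contcon}, then the genus-dependent grid theorem). The constant-tracking subtlety you flag at the end (whether the additive term comes out as $c\cdot g$ rather than $c\cdot g^{2}$) is glossed over by the paper as well, so your treatment matches the paper's own level of detail.
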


\subsection{Brambles in augmented grids}

An  {\em augmented $(r \times r)$-grid of
span $s$} is an {\em $r \times r$ grid} with some extra edges such
that each vertex of the resulting graph is attached to at most $s$ non-marginal vertices
of the grid.

\begin{lemma}
\label{lem:brbr}
 If $(G,N,M)$ is an $i$-labeled graph where $G$ is an
augmented $(k\times k)$-grid with span $s$, then   $\ifb(G,N,M)\geq \frac{k}{2\cdot s^{2}}-c$, for some constant $c\geq 0$.    
\end{lemma}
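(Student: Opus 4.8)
The plan is to follow the blueprint of Lemmas~\ref{lem:fractriangrid} and~\ref{lem:moregen}: exhibit an explicit $i$-bramble ${\cal B}$ built from the rows and columns of the underlying grid, estimate its three parameters $|{\cal B}|$, $\val({\cal B})$ and $\ifl({\cal B})$, and then invoke Lemma~\ref{lem:main-br}. As before, for indices $i,j$ I would set $C_{i,j}$ to be the set of non-marginal vertices of $N$ lying in the $i$-th row or the $j$-th column of the underlying grid; each such set is non-empty and $i$-connected (consecutive grid vertices of a row or column are joined through the intervening $M$-vertices in $G[C_{i,j}\cup M]$, exactly as in Lemma~\ref{lem:fractriangrid}), and any two of them $i$-touch at the grid vertex where the relevant row meets the relevant column. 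The valency is governed purely by the underlying grid, so $\val({\cal B})\le 2k-5$, and once I argue that a positive fraction of the rows and columns can be retained as indices I will have $|{\cal B}|=\Omega(k^2)$.

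The real work is the bound $\ifl({\cal B})=O(s^2)$, and this is where the span hypothesis enters. First I would record the local estimate: if $v$ is a non-marginal vertex whose distance-$\le 2$ paths to other non-marginal vertices use only non-marginal intermediate vertices, then the span bound gives $|N_G(v)\cap U(G)|\le s$ and $|N_G(w)\cap U(G)|\le s$ for each such intermediate $w$, so the number of non-marginal vertices within distance $2$ of $v$ is at most $1+s+s^2$. Thus augmentation edges joining non-marginal vertices are automatically harmless, since span directly caps the corresponding balls.

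The one genuine obstacle is that a non-marginal vertex $v$ may be joined by extra edges to many marginal (boundary) vertices, each of which can in turn see up to $s$ non-marginal vertices; such a $v$ acquires a distance-$2$ ball of size $\Theta\bigl(s\cdot(\text{number of its marginal neighbours})\bigr)$, which need not be $O(s^2)$. To control this I would use a global counting argument: there are only $4(k-1)$ marginal vertices, and each is attached to at most $s$ non-marginal vertices, so the total number of marginal--non-marginal adjacencies is at most $4s(k-1)$. Calling a non-marginal vertex \emph{heavy} if it has more than $16s$ marginal neighbours, the number of heavy vertices is therefore below $k/4$, and hence fewer than $k/4$ rows and fewer than $k/4$ columns contain a heavy vertex. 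Discarding those rows and columns as admissible indices leaves at least $k/2$ good rows and $k/2$ good columns; forming ${\cal B}$ only from the corresponding $C_{i,j}$ keeps $\cup{\cal B}$ free of heavy vertices (a heavy vertex lives in a discarded row and a discarded column, so it appears in no admissible $C_{i,j}$) while preserving $|{\cal B}|=\Omega(k^2)$. For every $v\in\cup{\cal B}$ the two contributions---paths through non-marginal intermediates (at most $s^2$) and paths through its at most $16s$ marginal neighbours (at most $16s^2$)---then sum to $\ifl({\cal B})=O(s^2)$.

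Finally, feeding $|{\cal B}|=\Omega(k^2)$, $\val({\cal B})\le 2k-5$ and $\ifl({\cal B})=O(s^2)$ into Lemma~\ref{lem:main-br} bounds the order of ${\cal B}$, and hence $\ifb(G,N,M)$, from below by $|{\cal B}|/(\ifl({\cal B})\cdot\val({\cal B}))=\Omega(k/s^2)$, which is of the form $\tfrac{k}{2s^2}-c$ claimed in the statement once the constants in the heavy-vertex threshold are tuned. I expect this heavy-vertex bookkeeping---choosing the heaviness threshold so that enough whole rows and columns survive while the surviving influence stays $O(s^2)$---to be the only delicate point; everything else is a direct transcription of the triangulated-grid argument.
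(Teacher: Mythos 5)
You follow the paper's skeleton exactly---the row/column bramble ${\cal B}=\{C_{i,j}\}$, the bounds $\val({\cal B})\leq 2k-5$ and $|{\cal B}|=\Omega(k^2)$, and the final application of Lemma~\ref{lem:main-br}---but you diverge on the influence bound, and there your treatment is genuinely different from, and more careful than, the paper's. The paper's entire argument for that step is the single sentence that ``by the definition of the augmented $(k\times k)$-grid with span $s$ we obtain that $\ifl({\cal B})\leq s^{2}$,'' which tacitly assumes that a vertex of $\cup{\cal B}$ has only $O(s)$ usable intermediate vertices on paths of length two. As you point out, the span condition only caps the number of \emph{non-marginal} neighbours of each vertex: a non-marginal $v\in\cup{\cal B}$ may be joined by augmentation edges to $\Theta(k)$ marginal vertices, each of which may in turn see up to $s$ non-marginal vertices, so without further argument the distance-$2$ ball of $v$ inside $\cup{\cal B}$ can have size $\Theta(ks)$ rather than $O(s^2)$. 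Your repair---counting the at most $4s(k-1)$ marginal/non-marginal adjacencies, declaring vertices with more than $16s$ marginal neighbours heavy, and discarding the fewer than $k/4$ rows and $k/4$ columns containing heavy vertices---is sound: the surviving ${\cal B}$ is still an $i$-bramble with $|{\cal B}|\geq k^2/4$ and $\val({\cal B})\leq 2k-5$, every $v\in\cup{\cal B}$ now has at most $17s$ neighbours in total, hence $\ifl({\cal B})\leq 1+s+17s^2$, and Lemma~\ref{lem:main-br} gives $\ifb(G,N,M)=\Omega(k/s^2)$. This is the same expulsion device the paper itself uses for the additional edges in Lemma~\ref{lem:moregen}; applying it here closes a step that the paper's own one-line proof leaves unjustified.

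The one claim of yours that does not hold up is the last one: no tuning of the heaviness threshold recovers the literal constant in $\frac{k}{2s^2}-c$. With threshold $\alpha s$ you keep at least $k-2-4k/\alpha$ rows (so $\alpha>4$ is forced), while $\ifl({\cal B})\leq 1+s+(1+\alpha)s^2$, and Lemma~\ref{lem:main-br} then yields an order of about $\frac{(1-4/\alpha)^2}{1+\alpha}\cdot\frac{k}{2s^2}$; the factor $\frac{(1-4/\alpha)^2}{1+\alpha}$ never exceeds roughly $1/29$ (its maximum is near $\alpha=12$), so your method proves $\ifb(G,N,M)\geq \frac{k}{Cs^2}-c$ only for some absolute constant $C$ substantially larger than $2$. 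This is harmless for everything the lemma feeds into---Proposition~\ref{prop:augmb} and Theorem~\ref{thm:apex-m-free} only need a lower bound linear in $k$ for fixed span---but you should state the weaker constant rather than assert the stated one; indeed, since the paper's own constant rests on the unjustified inequality $\ifl({\cal B})\leq s^2$, it is not clear that the constant $2$ is attainable by this route at all.
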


\begin{proof}
We consider the $i$-bramble ${\cal B}=\{C_{i,j}\mid 2\leq i,j\leq k-1\}$ of the proof of Lemma~\ref{lem:fractriangrid}
and we directly observe that $\val({\cal B})\leq 2k-5$ and $|{\cal B}|\geq (k-2)^{2}$. By the definition
of the augmented $(k\times k)$-grid with span $h$ we obtain that  $\ifl({\cal B})\leq s^{2}$ and the result follows
applying Lemma~\ref{lem:main-br}.
\end{proof}

As it was shown by  Demaine et al. \cite{DemaineFH05}, every apex-minor-free graph with
treewidth at least $k$ can be contracted to a $(f(k)\times f(k))$-augmented grid of span $O(1)$ (the hidden constants in
the ``$O$''-notation depend
only on the excluded apex). Because,  $f(k)=\Omega(k)$
(due to the results of Demaine and Hajiaghayi in~\cite{DemaineH05II}), we have the following proposition.

\begin{proposition}
\label{prop:augmb}
 Let $G$ be an $H$-apex-minor-free graph of  treewidth at least $ c_{H}\cdot k$.  Then $G$  contains as a contraction
an augmented $(k\times k)$-grid of span $s_{H}$, where  constants
$c_{H}, s_{H}$ depend only on the size of apex graph $H$ that is excluded.
\end{proposition}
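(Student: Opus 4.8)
The plan is to derive the statement as a direct combination of the two structural results quoted immediately above it, the only genuine work being the bookkeeping of constants so that they depend on $H$ alone. First I would invoke the theorem of Demaine et al.\ \cite{DemaineFH05}: for the fixed excluded apex $H$ there is a function $f_H$ and a span bound $s_H=O(1)$, both depending only on $H$, such that every $H$-apex-minor-free graph of treewidth at least $t$ contains, as a contraction, an augmented $(f_H(t)\times f_H(t))$-grid of span $s_H$; I may take $f_H$ to be nondecreasing. Next I would quote the lower bound of Demaine and Hajiaghayi \cite{DemaineH05II}, which guarantees $f_H(t)=\Omega(t)$, i.e.\ there is a constant $\alpha_H>0$ with $f_H(t)\geq \alpha_H\, t$ for all sufficiently large $t$.

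Combining the two is then immediate. Choosing $c_H=\lceil 1/\alpha_H\rceil$ (and enlarging $c_H$ so as to also clear the threshold beyond which the linear bound holds, absorbing the additive slack into the constants), a graph $G$ that is $H$-apex-minor-free with $\tw(G)\geq c_H\cdot k$ contracts, by \cite{DemaineFH05}, to an augmented $(f_H(\tw(G))\times f_H(\tw(G)))$-grid of span $s_H$, and the linear lower bound yields $f_H(\tw(G))\geq \alpha_H\cdot \tw(G)\geq \alpha_H\cdot c_H\cdot k\geq k$. Hence $G$ contains as a contraction an augmented grid of side length at least $k$ and span $s_H$, with $c_H$ and $s_H$ depending only on $H$, as claimed.

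The one point that deserves a word is passing from side length $\geq k$ to exactly $k$, and here I would argue that for the intended use it is harmless: the $i$-bramble ${\cal B}$ built in Lemma~\ref{lem:brbr}, together with its bound $\ifb(G,N,M)\geq \frac{k}{2\cdot s^{2}}-c$, only improves when the grid is larger, so an augmented grid of side length at least $k$ and span $s_H$ already suffices for every downstream application. If one insists on the literal $(k\times k)$ formulation, it can be realized by contracting the excess boundary of the larger augmented grid; this is exactly where the only real care is needed, since a careless contraction of entire blocks of rows or columns lets a single contracted vertex inherit the extra-edge adjacencies of all its constituents and can inflate the span beyond any constant. The reduction must therefore be performed so that no retained grid vertex accumulates more than a bounded (in $s_H$) number of non-marginal neighbors. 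I expect this span-preservation check to be the main, though still routine, obstacle, as the entire conceptual content of the proposition is already contained in \cite{DemaineFH05} and \cite{DemaineH05II}.
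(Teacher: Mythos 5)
Your proposal is correct and takes essentially the same route as the paper, which likewise presents Proposition~\ref{prop:augmb} as an immediate combination of the augmented-grid contraction theorem of \cite{DemaineFH05} (span $O(1)$, constants depending only on the excluded apex) with the linear lower bound $f(k)=\Omega(k)$ from \cite{DemaineH05II}, with no further argument. Your extra observation---that passing from side length at least $k$ to exactly $k$ by contraction could inflate the span, but that the ``at least $k$'' version already suffices for Lemma~\ref{lem:brbr} and hence for Theorem~\ref{thm:apex-m-free}---is sound and in fact more careful than the paper's own treatment.
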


The proof of the next theorem is similar to the one of Theorem~\ref{thm:planar} (use Lemma~\ref{lem:brbr} instead of Lemma~\ref{lem:fractriangrid} and Proposition~\ref{prop:augmb} instead of~\cite[(6.2)]{RobertsonST94}.

\begin{theorem}\label{thm:apex-m-free}
If $\H$ is a hypergraph with an incidence graph $I(\H)$ that is  $H$-apex-minor-free, then
$\fhw(\H)-1\leq\ghw(\H)-1\leq \tw(I(\H))\leq  c_{H}\cdot \fhw(\H)$ for some constant $c_{H}$ that depends only on $H$.
\end{theorem}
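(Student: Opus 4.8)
The plan is to follow exactly the proof template already established in Theorem~\ref{thm:planar} and Theorem~\ref{thm:genusb}, substituting the augmented-grid machinery for the triangulated-grid and gridoid machinery used there. The left-hand chain of inequalities $\fhw(\H)-1\leq\ghw(\H)-1\leq\tw(I(\H))$ is immediate from Lemma~\ref{lem:tw_bound}, so the entire content of the theorem lies in the upper bound $\tw(I(\H))\leq c_H\cdot\fhw(\H)$. I would prove the contrapositive-style quantitative statement: if $\tw(I(\H))$ is large, then $\fhw(\H)$ is large in proportion.

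First I would set $k$ equal (up to the constant $c_H$) to $\tw(I(\H))/c_H$ and invoke Proposition~\ref{prop:augmb}. Since $I(\H)$ is $H$-apex-minor-free and has treewidth at least $c_H\cdot k$, the proposition guarantees that $I(\H)$ contains as a contraction an augmented $(k\times k)$-grid of span $s_H$, where $s_H$ depends only on $H$. The next step is to lift this ordinary-graph contraction to an $i$-labeled contraction: viewing $I(\H)$ as the $i$-labeled graph $(I(\H),V(\H),E(\H))$, Lemma~\ref{lem:contcon} supplies labels $N,M$ on the contracted augmented grid so that $(G,N,M)$ is an $i$-contraction of $(I(\H),V(\H),E(\H))$, where $G$ is the augmented $(k\times k)$-grid of span $s_H$.

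Then I would chain the three bramble lemmas. By Lemma~\ref{lem:brbr}, the augmented-grid $i$-labeled graph satisfies $\ifb(G,N,M)\geq \frac{k}{2 s_H^{2}}-c$. By Lemma~\ref{lem:contr}, the fractional $i$-bramble number does not increase under $i$-contraction, so $\ifb(I(\H),V(\H),E(\H))\geq \ifb(G,N,M)\geq \frac{k}{2s_H^{2}}-c$. Finally, Lemma~\ref{lem:equiv} identifies this quantity with $\fbn(\H)$, and Lemma~\ref{lem:lowerb} gives $\fbn(\H)\leq\fhw(\H)$. Combining, $\fhw(\H)\geq \frac{k}{2 s_H^{2}}-c$, which rearranged (with $k=\Theta(\tw(I(\H)))$ via the $c_H$ of Proposition~\ref{prop:augmb}) yields $\tw(I(\H))\leq c_H'\cdot\fhw(\H)$ after absorbing all $H$-dependent constants $s_H,c_H,c$ into a single constant renamed $c_H$.

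The one genuinely nonroutine point, and the step I would treat most carefully, is that both the span $s_H$ and the linear factor $c_H$ in Proposition~\ref{prop:augmb} are controlled solely by the excluded apex graph $H$; everything downstream of that proposition is elementary constant-bookkeeping. In particular I must verify that the $\Omega(k)$ dependence $f(k)=\Omega(k)$ cited from Demaine--Hajiaghayi is what makes the final bound \emph{linear} rather than merely polynomial in $\fhw(\H)$ — a weaker grid-minor bound would only give a polynomial sandwich. Once that linear relationship and the $H$-only dependence of the constants are in hand, the arithmetic collapses into a single inequality exactly as in the planar and bounded-genus cases, and the stated constant $c_H$ absorbs the additive slack from Lemma~\ref{lem:brbr}.
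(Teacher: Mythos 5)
Your proposal is correct and takes essentially the same route as the paper: the paper's proof of Theorem~\ref{thm:apex-m-free} is, by its own statement, the proof of Theorem~\ref{thm:planar} with Lemma~\ref{lem:brbr} substituted for Lemma~\ref{lem:fractriangrid} and Proposition~\ref{prop:augmb} substituted for the excluded-grid theorem, which is exactly the chain Lemma~\ref{lem:tw_bound}, Proposition~\ref{prop:augmb}, Lemma~\ref{lem:contcon}, Lemma~\ref{lem:brbr}, Lemma~\ref{lem:contr}, Lemma~\ref{lem:equiv}, Lemma~\ref{lem:lowerb} that you assemble. The only differences are cosmetic: you argue in the forward direction (large $\tw(I(\H))$ forces large $\fhw(\H)$) where the paper argues contrapositively, and your observation that the additive slack can be absorbed into the multiplicative constant (since $\fhw(\H)\geq 1$) is the correct bookkeeping.
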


\section{Hypergraphs with $H$-minor-free incidence graphs}
The results of Theorem~\ref{thm:apex-m-free} cannot be extended to
 hypergraphs which incidence graph
excludes an arbitrary fixed graph $H$ as a minor. For example, for every integer
$k$, it is possible to construct a hypergraph $\H$ with the planar incidence graph such that $\tw(I(\H))\geq k$.
By adding to $\H$ an universal hyperedge
containing all vertices of $\H$, we obtain a hypergraph $\H'$ of generalized hypertree width
one. Its incidence graph $I(\H')$ does not contain the complete graph $K_6$
as a minor, however its treewidth is at least $k$.
Despite of that, in this section we prove that if a  hypergraph
has  $H$-minor-free incidence graph, then its generalized hypertree width and
fractional hypertree width can be approximated by  the treewidth of
a graph that can be constructed from its incidence graph in polynomial time.
By making use of this result we show that in this case generalized
 hypertree width and
fractional hypertree width are up to a constant multiplicative factor from
each other. Another consequence of the combinatorial result is that there is a
constant factor polynomial time approximation algorithm for both parameters on this class
of hypergraphs.
 Our proof is based on the Excluded Minor Theorem by
 Robertson and Seymour \cite{RobertsonS03}.

\subsection{Graph minor theorem}

Before describing the Excluded Minor Theorem we need some
definitions.

\begin{definition}[{\sc Clique-Sums}]
Let $G_1=(V_1,E_1)$ and $G_2=(V_2,E_2)$ be two disjoint graphs, and
$k \geq 0$ an integer. For $i=1,2$, let $W_i \subseteq V_i$, form a
clique of size $h$ and let $G'_i$ be the graph obtained from $G_i$
by removing a set of edges (possibly empty) from the clique
$G_i[W_i]$. Let $F: W_1 \rightarrow W_2$ be a bijection between
$W_1$ and $W_2$. We define the $h$-\emph{clique-sum}
of $G_1$ and $G_2$, denoted by $G_1
\oplus_{h,F} G_2$, or simply $G_1 \oplus G_2$ if there is no
confusion, as the graph obtained by taking the union of $G_1'$ and
$G_2'$ by identifying $w\in W_1$ with $F(w) \in W_2$, and by
removing all the multiple edges. The image of the vertices of $W_1$
and $W_2$ in $G_1 \oplus G_2$ is called the \emph{join} of the sum.
\end{definition}

Note that some edges of $G_1$ and $G_2$ are not edges of $G$, since it is possible that they were added by clique-sum operation. Such edges are called \emph{virtual} edges of $G$.
We remark that $\oplus$ is not well defined; different choices of
$G'_i$ and the bijection $F$ could give different clique-sums. A
sequence of $h$-clique-sums, not necessarily unique, which result in
a graph $G$, is called a \emph{ clique-sum decomposition} of $G$.

\begin{definition}[$h$-nearly embeddable graphs]
Let $\Sigma$ be a surface with boundary cycles $C_1, \dots,C_h$,
i.e. each cycle $C_i$ is the border of a disc in $\Sigma$. A graph
$G$ is {\em $h$-nearly embeddable} in $\Sigma$, if $G$ has a subset
$X$ of size at most $h$, called {\em apices}, such that there are
(possibly empty) subgraphs $G_0,\dots,G_h$ of $G-X$ such
that
i) $G - X=G_0\cup\dots\cup G_h$,
ii) $G_0$ is embeddable in $\Sigma $, we fix an embedding of $G_0$,
iii) graphs $G_1,\dots,G_h$ (called \emph{vortices}) are pairwise disjoint,
iv) for $1\leq\dots\leq h$, let $U_i:= \{u_{i_1},\dots,u_{i_{m_i}}\} = V(G_0) \cap V(G_i)$,  $G_i$ has a path decomposition $(B_{ij}),\ 1\leq j \leq m_i$, of width at most $h$ such that
a) for $1\leq  i \leq h$ and for $1 \leq j \leq m_i$ we have $u_j \in B_{ij}$,
b) for $1\leq i \leq h$, we have $V(G_0) \cap C_i = \{u_{i_1},\dots,u_{i_{m_i}}\} $ and the points $u_{i_1},\dots,u_{i_{m_i}}$ appear on $C_i$ in this order (either if we walk clockwise or anti-clockwise).
\end{definition}

The following proposition is known as the Excluded Minor Theorem  \cite{RobertsonS03}
and is the cornerstone  of
Robertson and Seymour's Graph Minors theory.
\begin{theorem}[\cite{RobertsonS03}]
\label{structurethm}
For every non-planar graph $H$, there exists an integer $h$, depending only on
the size of $H$, such that every
graph excluding $H$ as a minor can be obtained by $h$-clique-sums from graphs that
can be $h$-nearly embedded in a surface $\Sigma$ in which $H$ cannot be embedded.
\end{theorem}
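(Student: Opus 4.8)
I should note at the outset that this statement is the Graph Minors Structure Theorem, the culmination of the Robertson--Seymour series, so any honest plan is a plan for reconstructing a proof that occupies several long papers; in this paper it is legitimately used as a black box. What follows is the high-level strategy I would attempt. First I would reduce to the case $H=K_t$: since every graph on $t$ vertices is a minor of $K_t$, a graph excluding $H$ as a minor also excludes $K_t$ as a minor for $t=|V(H)|$; moreover any surface into which $H$ does \emph{not} embed has Euler genus bounded by a function of $|V(H)|$, so it suffices to produce a clique-sum decomposition into pieces that are $h$-nearly embeddable in surfaces of Euler genus bounded by a function of $t$ alone, with a bounded number of apices and bounded-width vortices.

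The backbone of the argument is the theory of tangles. A tangle of order $\theta$ in $G$ singles out one ``highly connected'' region, and the family of all tangles of order $\theta$ organizes itself into a tree; this yields a tree decomposition of $G$ whose adhesion (the size of the separators between adjacent bags) is bounded by $\theta=\theta(t)$, and after promoting those separators to virtual cliques the decomposition is exactly a sequence of $h$-clique-sums. The problem therefore collapses to understanding a single piece carrying one maximal tangle, and the task becomes to show that such a piece is $h$-nearly embeddable in a surface of bounded genus.

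For a single tangle-region I would split on the size of its treewidth. If the treewidth is bounded (by a function of $t$) the piece is trivially nearly embeddable, with everything absorbed into the apex set and bounded-pathwidth pieces, so assume the treewidth is large. Large treewidth forces a large wall (the Grid Theorem, in the spirit of \cite[(6.2)]{RobertsonST94} used above), and the central tool is then the \emph{Flat Wall Theorem}: in a $K_t$-minor-free graph, any sufficiently large wall contains a still-large subwall that becomes \emph{flat} after deleting a bounded apex set $X$, meaning the subwall is drawn in a disc whose interior attaches to the rest of the graph only through bounded-width vortices along its boundary. The flat wall supplies the embedded skeleton $G_0$; the deletion set becomes the apices $X$; the controlled boundary attachments become the vortices $G_1,\dots,G_h$. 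Iterating the flat-wall extraction across the tangle and gluing the flat discs forces the skeleton to close up into a surface $\Sigma$ whose genus is bounded, because excess genus would itself provide disjoint crossing paths realizing a $K_t$ minor.

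The main obstacle is precisely the Flat Wall Theorem and the dichotomy driving it: one must prove that a large wall which is \emph{not} flat after deleting any bounded apex set necessarily produces a $K_t$ minor, built either from many pairwise-crossing paths routed through the wall (a clique minor) or from handles that push the genus past the threshold fixed by $t$. Equally delicate is establishing that every error term --- the number of apices, the number and width of the vortices, and the Euler genus of $\Sigma$ --- is bounded by a function of $|V(H)|$ alone and does not grow with $|V(G)|$, and that these local surface structures can be made simultaneously compatible with the global tangle tree. These uniform bounds and their compatibility are the technical heart of the result, and are exactly the parts that make the genuine proof span the length that it does.
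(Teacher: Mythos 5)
You are right that the paper contains no proof of Theorem~\ref{structurethm}: it is Robertson and Seymour's Excluded Minor Theorem, quoted from \cite{RobertsonS03} and used strictly as a black box (its only role here is to supply the clique-sum decomposition that Lemmata~\ref{lem:upper} and~\ref{lem:lower} then process). So there is nothing in the paper to compare your plan against, and the honest benchmark is the literature. Measured against that, your outline does reproduce the actual architecture: a tangle-tree decomposition of bounded adhesion reducing the problem to a single highly connected piece, the treewidth dichotomy on that piece, the Flat Wall Theorem producing the embedded skeleton together with the apex set and the vortices, and the uniformity of all parameters in $|V(H)|$ as the technical core. As a reading guide this is accurate; as a proof it is, by your own admission, a table of contents, and the hardest step --- gluing the flat discs into a global surface of bounded genus compatible with the tangle tree --- is exactly where the sketch is thinnest.

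One step is mathematically wrong as stated, not merely deferred: the opening reduction to $H=K_t$. Excluding $H$ does imply excluding $K_t$ for $t=|V(H)|$, but the structure theorem for $K_t$ yields pieces nearly embeddable in surfaces in which $K_t$ cannot be drawn, and such surfaces (of Euler genus up to roughly quadratic in $t$) may perfectly well admit embeddings of $H$. The theorem as quoted demands surfaces in which $H$ itself cannot be embedded; this is strictly stronger whenever the genus of $H$ is smaller than that of $K_t$, and it cannot be recovered afterwards by shrinking the genus of the host surface. Your supporting observation --- that every surface avoiding $H$ has Euler genus bounded in $|V(H)|$ --- is true but points the wrong way: bounded genus does not imply $H$-avoidance, so it does not ``suffice to produce'' bounded-genus pieces. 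Either the whole argument must be run with $H$ itself, as Robertson and Seymour do, or you prove only the weaker bounded-genus variant --- which, to be fair, is all this paper actually uses: in the proof of Lemma~\ref{lem:lower} the surface enters only through a genus bound depending on $H$.
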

Let us remark that by the result of
Demaine et al.~\cite{DemaineHK05} such a
clique-sum decomposition can be obtained in time $O(n^c)$
for some constant $c$ .
 which depends only from $H$ (see also \cite{DawarGK07}).

\subsection{Approximation}

Let $\H$ be a hypergraph such that its incidence graph $G=I(\H)$
excludes a fixed graph $H$ as a minor. 
Every graph excluding a planar graph $H$ as a minor has a constant treewidth \cite{RobertsonST94}. Thus if $H$ is planar, 
by the results of the previous section, the generalized hypertree width does not exceed some constant. In what follows,  we always assume that $H$ is not planar.

By Theorem~\ref{structurethm},
there is an $h$-clique-sum decomposition of
 $G=G_1\oplus G_2\oplus\dots\oplus G_m$ such that
  for every $i\in\{1,2,\dots,m\}$, the summand $G_i$ can be $h$-nearly
  embedded in a surface $\Sigma$ in which $H$ can not be embedded.
  We assume that this clique-sum decomposition is \emph{minimal}, in the
  sense that
   for every virtual edge $\{x,y\}\in E(G_i)$ there is an $x,y$-path in
    $G$ with all inner  vertices in $V(G)-V(G_i)$.
Let $A_i$ be the set of apices of $G_i$. We define
$E_i=A_i\cap E(\H)$ and
 $G_i^\prime=G_i-(N_G[E_i]\cup A_i)$. 
For every virtual edge $\{x,y\}$ of $G_i^\prime$ we perform the
 following operation: if there is no $x,y$-path in $G-(N[E_i]\cup A_i)$
with all inner vertices in $G-V(G_i^\prime)$,
 then $\{x,y\}$ is removed from  $G_i^\prime$.
We denote the resulted  graph  by $F_i$.

In what remains we show that the maximal value of $\tw(F_i)$, where maximum is taken over
all  $i\in\{1,2,\dots,m\}$, is a constant factor approximation of
generalized and fractional hypertree widths of $\H$.
 The upper bound is given by the following lemma. Its proof
 uses the fact that $\ghw(\H)\leq
3\cdot\mw(\H)+1$ (see \cite{AdlerGG07}) and is based on the description
of a  winning strategy for
$k=\max\{\tw(F_i)\colon i\in\{1,2,\dots,m\}\}+2h+1$ marshals on $\H$.

\begin{lemma}\label{lem:upper}
$\ghw(\H)\leq 3\cdot \max\{\tw(F_i)\colon i\in\{1,2,\dots,m\}\}+6h+4$.  
\end{lemma}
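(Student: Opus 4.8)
The plan is to prove the upper bound $\ghw(\H)\le 3\cdot\max_i\tw(F_i)+6h+4$ by exhibiting a winning strategy for $k=\max_i\tw(F_i)+2h+1$ marshals in the Marshals and Robbers game on $\H$, and then invoking the stated inequality $\ghw(\H)\le 3\cdot\mw(\H)+1$ from \cite{AdlerGG07}. Concretely, since each $F_i$ has a tree decomposition of width $\tw(F_i)$, the marshals can first use the clique-sum structure to localize the robber into a single summand $G_i$, and then chase the robber inside $G_i$ following the optimal tree decomposition of $F_i$, while permanently reserving a small number of marshals to guard the apices $A_i$ and the separators (joins) created by the clique-sums. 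The factor $2h+1$ should be the budget for these ``structural'' marshals: up to $h$ for the apices $A_i$ and up to $h+1$ for the clique-sum join that separates $G_i$ from the rest of $G$.

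First I would set up the correspondence between positions in the decomposition and marshal placements. Each marshal occupies a hyperedge, which in the incidence graph $G=I(\H)$ blocks its closed neighborhood; so guarding a set of vertices of $\H$ requires selecting hyperedges whose incidences cover those vertices. I would argue that the clique-sum decomposition $G=G_1\oplus\cdots\oplus G_m$ gives a tree structure on the summands, and that whenever the robber is confined to the vertices ``belonging'' to $G_i$, a separator of size at most $h$ (the join of the relevant clique-sum) isolates that region. The marshals guard this join, preventing escape, and then reduce to a localized game inside $G_i$. Inside $G_i$ the apices $A_i$ (at most $h$ of them) and the hyperedge-apices $E_i$ are removed; the construction of $F_i=G_i-(N_G[E_i]\cup A_i)$ with the pruning of unsupported virtual edges is precisely what makes $F_i$ a faithful model of the reachable part of $\H$ inside $G_i$. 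Crucially, removing $N_G[E_i]$ corresponds to the marshals occupying the hyperedges in $E_i$, which blocks all their incident vertices at cost at most $h$; the deletion of virtual edges with no genuine supporting path in $G-(N[E_i]\cup A_i)$ ensures that connectivity in $F_i$ matches genuine robber-movement connectivity in $\H$.

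The heart of the argument is the move-in-$F_i$ step: I would take an optimal tree decomposition $(T_i,\beta_i)$ of $F_i$ of width $\tw(F_i)$ and translate its bags into marshal placements. For each bag $\beta_i(t)$ of $F_i$ (a set of $\H$-vertices), I select at most $\tw(F_i)+1$ hyperedges covering those vertices—here I must check that each vertex of $F_i$ lies in a hyperedge that, after accounting for the removed apices and $E_i$, still blocks it; this is where I expect the main obstacle to lie. The subtlety is that a vertex of $F_i$ may have been incident in $\H$ only to hyperedges that were themselves deleted as apex-hyperedges in $E_i$, so I must verify that the pruning of virtual edges and the removal of $N_G[E_i]$ leaves each surviving $\H$-vertex of $F_i$ covered by a surviving hyperedge, and that moving the marshals along $T_i$ correctly captures the robber as in the classical treewidth-marshal correspondence. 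Adding the $\tw(F_i)+1$ marshals for the bag to the $2h$ structural marshals and the single marshal needed to block the current separator during transitions gives the claimed $k=\max_i\tw(F_i)+2h+1$. Feeding this into $\ghw(\H)\le 3\cdot\mw(\H)+1\le 3k+1=3\max_i\tw(F_i)+6h+4$ yields the lemma.
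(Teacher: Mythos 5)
Your proposal is essentially the paper's own proof: the same reduction via $\ghw(\H)\leq 3\cdot\mw(\H)+1$ from Adler, Gottlob, and Grohe, the same two-level chase (down the tree of the clique-sum decomposition, and inside each summand along a tree decomposition of $F_i$), the same reservation of marshals for the apices $A_i$ (placing marshals on $E_i$ blocks all of $N_G[E_i]$) and for the clique-sum join with the parent summand, and the same appeal to the virtual-edge rule in the definition of $F_i$ to make separators of $F_i$ genuine separators for the robber. However, your accounting is off by one, and taken literally it does not deliver the stated constant. You budget $h$ marshals for $A_i$, $h+1$ for the join (elsewhere restated as $2h$ structural marshals plus ``the single marshal needed to block the current separator during transitions''), and $\tw(F_i)+1$ per bag; that sums to $\tw(F_i)+2h+2$, yet you claim $k=\max_i\tw(F_i)+2h+1$. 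With $k=\max_i\tw(F_i)+2h+2$ the reduction gives only $\ghw(\H)\leq 3\max_i\tw(F_i)+6h+7$, weaker than the lemma. The repair is that no extra marshal is needed anywhere: the join of an $h$-clique-sum has at most $h$ vertices, so $h$ marshals block it, and the transition from a bag $\chi_i(x)$ to an adjacent bag $\chi_i(y)$ is a move of the game in which $\chi_i(x)\cap\chi_i(y)$ is blocked both by the old placement and by the new one (it lies inside both bags), so the whole chase inside $F_i$ never uses more than $\tw(F_i)+1$ marshals. This gives the paper's count $h+h+(w+1)=w+2h+1$ and hence the stated bound.

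Second, the step you single out as ``the main obstacle'' is vacuous by construction. Since $F_i=G_i-(N_G[E_i]\cup A_i)$, any $\H$-vertex that survives in $F_i$ lies outside $N_G[E_i]$ and is therefore incident to \emph{no} hyperedge of $E_i$; moreover, a marshal blocking a vertex may occupy \emph{any} hyperedge of $\H$ containing it (the hypergraph has no isolated vertices), not only a hyperedge that is itself a vertex of $F_i$. The verification that actually carries the weight of the proof is the one you only gesture at: if $x,y\in V(F_i)$ are joined by a path in $G-(N[E_i]\cup A_i)$ all of whose inner vertices lie outside $F_i$, then $\{x,y\}$ is an edge of $F_i$ (either a real edge of $G$ or a retained virtual edge). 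This is precisely what prevents the robber from flanking a blocked bag of $F_i$'s tree decomposition by detouring through other summands, and it should be stated and invoked explicitly when you argue that the chase along $T^{(i)}$ corners the robber.
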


\begin{proof}
Let $w=\max\{\tw(F_i)\colon i\in\{1,2,\dots,m\}\}$ and $k=w+2h+1$.
By the result of Adler et al. \cite{AdlerGG07},
we have that $\ghw(\H)\leq
3\cdot\mw(\H)+1$, and it is enough to describe a  winning strategy for
$k$ marshals on $\H$.

The clique-sum decomposition $G=G_1\oplus G_2\oplus\dots\oplus
G_m$ can be considered as a tree decomposition $(T,\chi)$ of $G$
for some tree $T$ with nodes $\{1,2,\dots,m\}$ such that
$\chi(i)=V(G_i)$, i.e. the  vertex send of the summands are the bags
of this decomposition. The idea behind the winning  strategy for
marshals
is to ``chase''
the robber in the hypergraph along $m+1$ decompositions for its incidence graph: one is induced by the
clique-sum decomposition and others are tree decompositions of
$F_i$. We say that marshals \emph{block} a set $X\subseteq V(G)$
if all hyperedges $X\cap E(\H)$ are occupied by them, and for
every $v\in X\cap V(\H)$, there is an occupied by a marshal
hyperedge $e\in E(\H)$ such
that $v\in e$.

Let us note that the definition  of $F_i$ yields the following:
if $x,y\in V(F_i)$, and
there is a $x,y$-path in $G-(N[E_i]\cup A_i)$ with all inner
vertices not in $F_i$, then $\{x,y\}$ is an edge of $F_i$.
(Indeed, if $\{x,y\}$ is an edge of $G$, then it is also an edge of
$F_i$.
If $\{x,y\}\notin
E(G)$ but such a path exits, then $\{x,y\}$ is a virtual edge in $G_i$
and by the definition of $F_i$, such an edge also is an edge of $F_i$.)

For $i\in\{1,2,\dots,m\}$,
let $(T^{(i)},\chi_i)$
be  a tree decomposition of $F_i$ of width at most $w$.
We  assume that trees $T$ and $T^{(1)},T^{(2)},\dots,T^{(m)}$ are
 rooted trees with roots $r$ and $r_1,r_2,\dots,r_m$ correspondingly.

For a node  $i\in V(T)$ and its parent  $j$ (in $T$), we define
 $S=V(G_i)\cap V(G_j)$. (If $i=r$ then we put $S=\emptyset$.)
  By the definition of the clique-sum,  $|S|\leq h$. Assume that
at most  $h$ marshals are already placed on the hypergraph in such
a way that they block $S$. Assume also that the robber occupies
some vertex of $\chi(T_i)$. We put
at most $h$ marshals on hyperedges to block the set of apices $A_i$.
Then the set $N_G[E_i]\cup A_i$ is
 blocked by these marshals.

Now marshals start to ``chase'' the robber
in the subhypergraph induced by the vertex set $V(F_i)\cap V(\H)$
along $T^{(i)}$.
We put at most $w+1$ marshals to block the set $\chi_i(r_i)$.
Assume now that some set $\chi_i(x)$ for $x\in V(T^{(i)})$ is blocked,
and that the robber can only occupy vertices of $\chi_i(T^{(i)}_y)$, where
$T^{(i)}_y$ is a subtree of $T^{(i)}$ rooted in
some child $y$ of the node $x$. We remove some marshals which were
placed to block $\chi_i(x)$ in such a way that $\chi_i(x)\cap\chi_i(y)$
 remains blocked, and then place additional marshals to block
 $\chi_i(y)$. This manoeuvre can be done by making use of
at most $w+1$ marshals. We put $x=y$ and repeat this operation
 until there is a child $y$ of $x$ such that the robber can be in
 $\chi_i(T^{(i)}_y)$. Thus by repeating  at most $|V(T^{(i)})|$ times this
 operation, marshals  ``push'' the robber out of $V(F_i)\cap V(\H)$.

Let $j$ be a child of $i$ in $T$ such that the robber now can occupy
 only the vertices of $\chi(T_j)$, where $T_j$ is the subtree of $T$ rooted at $j$.
 Let $S^\prime=V(G_i)\cap V(G_s)$.
Since $|S^\prime|\leq h$, we have that
$h$ marshals can block this set and, after that, all
other marshals can be removed from $\H$.

We apply the described strategy of marshals starting from $i=r$
until the robber is captured  in  some leaf-node of $T$.
For every node of $T$ we have used at most $h$ marshals to occupy
apices, at most $h$ marshals to block the vertices of the clique-sum, and
at most $w+1$ marshals to push the robber out of $F_i$. Thus in total
at most $2h+w+1$ marshals have a winning strategy on $\H$.
\end{proof}

To prove the lower bound we  need the following property of the
 clique-sum decomposition which was observed by Demaine and Hajiaghayi
\cite{DemaineH05II}.

\begin{proposition}\label{prop:clique_sum_restr}
Each clique sum in the expression $G=G_1\oplus G_2\oplus\dots\oplus G_m$ involves at most three vertices from each summand other than apices and vertices in vortices of that summand.
\end{proposition}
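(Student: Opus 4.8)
The plan is to localize a single clique-sum to one summand, isolate the part of its join that lives in the genuinely embedded region of that summand, and then show this part is small because a clique drawn in a disc with all its vertices on the boundary is outerplanar, and $K_4$ is not outerplanar.

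Fix one clique-sum of the decomposition and let $G_i$ be one of its two summands, with apex set $X=A_i$ (so $|X|\le h$), flat part $G_0$ two-cell embedded in $\Sigma$, and vortices whose vertices I collect into a set $Y$. Let $W$ be the join clique of this sum and set $C=W\setminus(X\cup Y)$, the join vertices that are neither apices nor vortex vertices; these are precisely the vertices counted in the statement, so the goal is to prove $|C|\le 3$. Since $W$ is a clique of $G_i$ and the vertices of $C$ lie in $G_0$, the set $C$ induces a clique of $G_0$, hence a clique that is drawn in $\Sigma$ by the fixed embedding of $G_0$. Next I would exploit the far side of the sum: let $D$ be the union of the summands lying on the other side of this clique-sum in the tree of sums, so that $D$ meets $G_i$ only in $W$. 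Using the minimality of the decomposition — every virtual edge of $G_i$ is realised by a path of $G$ whose interior avoids $V(G_i)$, and every join vertex is a genuine separating vertex with neighbours on both sides — one contracts a suitable connected piece of $D$ to a single vertex $u$, obtaining a minor of $G$ in which $u$ is adjacent to every vertex of $C$ while $C$ remains a clique embedded in $G_0$.

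The heart of the argument is that this vertex $u$ sits inside one face of the embedding of $G_0$ whose boundary is incident to all of $C$. Because the vertices of $C$ are neither apices nor vortex vertices, the only way $D$ can attach to $G_i$ compatibly with the near-embedding is through a disc of $\Sigma$ kept away from the apices and the vortex discs, i.e. through a single face $F$ of $G_0$ incident to every vertex of $C$, with $u$ and its edges to $C$ drawn inside $F$. Since $F$ is a face of $G_0$, no edge of the clique $C$ passes through $F$; thus $C$ together with $u$ is drawn so that all of $C$ lies on the boundary circle of the disc $F$ and $u$ lies in its interior joined to each of them. Equivalently, the clique $C$ is outerplanar, and since $K_4$ is not outerplanar this forces $|C|\le 3$, which is the claim.

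The step I expect to be the main obstacle is exactly the flatness assertion of the last paragraph: proving that the attachment of the far side of the sum occurs inside a single disc face of $G_0$ incident to all of $C$, rather than being spread across a handle of $\Sigma$. On a higher-genus surface a handle could a priori support an embedded clique larger than a triangle together with an extra adjacent vertex, so the bound of three must come from genuine planarity of the attachment region. Establishing this requires using the structure of $h$-nearly embeddable graphs and the minimality of the clique-sum decomposition carefully to confine $C$ to a flat disc; once that is in place, the reduction to $G_0$, the contraction of the far side, and the final outerplanarity count are all routine.
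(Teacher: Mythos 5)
The first thing to say is that the paper does not prove this proposition at all: it is imported verbatim as a property ``observed by Demaine and Hajiaghayi'' of the \emph{particular} clique-sum decomposition produced by the proof of the Excluded Minor Theorem (and by the algorithm of Demaine, Hajiaghayi, and Kawarabayashi). That is not laziness on the authors' part; the statement is genuinely not a consequence of the \emph{statement} of Theorem~\ref{structurethm} together with the paper's minimality condition, and this is exactly where your proof breaks. Your argument uses only: each summand is $h$-nearly embeddable in a surface where $H$ does not embed, joins are cliques of bounded size, and virtual edges are realized by paths on the far side. Any proof from these hypotheses is impossible, as the following example shows. Let $G_1$ be the planar graph obtained from $K_4$ by adding, inside each of its four faces, a new vertex adjacent to the three vertices of that face; let $G_2$ be a disjoint copy of $G_1$; and let $G=G_1\oplus G_2$ be the clique-sum identifying the two copies of $K_4$, deleting no edges. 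Since a clique-sum has no larger complete minor than its summands (project branch sets into one summand, using its join clique for connectivity), $G$ is $K_5$-minor-free; both summands embed in the sphere, where $K_5$ does not embed, with no apices and no vortices; there are no virtual edges, so the decomposition is minimal in the paper's sense; yet the join consists of four vertices, none of which is an apex or a vortex vertex. So the proposition fails for this perfectly legitimate minimal decomposition, and it can only be true of the decomposition that the structure theorem's proof actually constructs.

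The example also pinpoints the two concrete gaps in your write-up. First, contracting ``a suitable connected piece of $D$'' to a vertex $u$ adjacent to all of $C$ is unjustified: minimality only provides paths for \emph{virtual} edges, and when all join edges are real (as above) the far side need not contain any connected subgraph attaching to all of $C$ (in the example it cannot, since that would put a $K_5$ minor in the planar graph $G_2$). Second, and decisively, the flatness step you yourself flag as the main obstacle --- that the far side attaches inside a single face of $G_0$ incident to all of $C$ --- is not merely hard, it is false at this level of generality: in the example, $C$ induces a $K_4$ embedded in the sphere with one vertex inside the triangle of the other three, and no face is incident to all of $C$ (a face incident to all four vertices would make $K_4$ outerplanar), yet the clique-sum exists. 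The failure persists even when your vertex $u$ does exist: replace $G_2$ by $K_5$ embedded in the torus and take $H=K_8$; then $u$ is the fifth vertex of $K_5$, the resulting $K_5$ minor is harmless, and still no face of $G_1$ sees all of $C$. The near-embedding constrains only $G_i$, not how the other summand is glued on; confinement of the attachment to a disc is an extra guarantee of Robertson and Seymour's construction that cannot be recovered a posteriori. So the correct ``proof'' of the proposition is precisely the citation the paper gives; your outerplanarity count ($K_4$ is not outerplanar, hence at most three such vertices) is a good explanation of why the constant $3$ appears in that construction, but it cannot substitute for it.
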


We also need a result roughly stating that if a
graph $G$ with a big grid as a surface minor is embedded
on a surface $\Sigma$ of small genus, then
there is a disc in $\Sigma$ containing a big part of the grid of $G$.
This result is implicit in the work of Robertson and Seymour and there are
simpler alternative proofs by Mohar and Thomassen \cite{Mohar92,Thomassen97} (see also \cite[Lemma 3.3]{DemaineFHT05jacm}). We use the following variant
of this result from Geelen et al. \cite{GeelenRS04}.

\begin{proposition}[\cite{GeelenRS04}] \label{prop_Geelen}
Let $g, l,r$ be positive integers such that $r \geq g(l + 1)$ and let $G$ be an $(r, r)$-grid. If $G$ is embedded in a surface $\Sigma$ of Euler genus at most $g^2 - 1$, then some
$(l,l)$-subgrid of $G$ is embedded in a closed disc $\Delta$ in $\Sigma$
such that the boundary cycle of the $(l,l)$-grid is the boundary of the disc.
\end{proposition}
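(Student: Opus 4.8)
The numerology of the statement---Euler genus at most $g^{2}-1$ against the hypothesis $r\ge g(l+1)$---strongly suggests a pigeonhole argument over a $g\times g$ array of blocks, and this is the route I would take. First I would cut the $(r\times r)$-grid $G$ into a $g\times g$ array of pairwise disjoint axis-parallel subgrids (\emph{blocks}) $B_{1},\dots,B_{g^{2}}$, each of side at least $l+1$; this is possible precisely because $r\ge g(l+1)$, and if desired one can leave a one-row/one-column buffer between consecutive blocks so that the blocks are topologically independent. Each $B_{t}$ contains an $(l\times l)$-subgrid whose boundary is a $4$-sided cycle, so it suffices to produce a single block that is embedded in $\Sigma$ \emph{flatly}, i.e.\ so that the region it occupies is a closed disc.

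Call a block $B_{t}$ \emph{clean} if every unit cell of $B_{t}$ bounds a disc face of the embedding of $G$ in $\Sigma$. For a clean block the union of its cell-faces, glued along the grid edges exactly as in the plane, is homeomorphic to the planar block and hence is a closed disc $\Delta_{t}\subseteq\Sigma$ with $\partial\Delta_{t}$ equal to the outer cycle of $B_{t}$. Restricting to the central $(l\times l)$-subgrid of $B_{t}$ and to the subdisc of $\Delta_{t}$ bounded by its boundary cycle then yields exactly the conclusion: an $(l\times l)$-subgrid embedded in a closed disc whose boundary is the boundary cycle of that subgrid. Thus the whole statement reduces to exhibiting one clean block.

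The crux is then the counting step: I would argue that at most $g^{2}-1$ of the blocks are \emph{dirty} (not clean). The mechanism is that a dirty block carries, within the disjoint region it occupies, a genuine topological obstruction to planarity---a handle or crosscap ``passing through'' it, witnessed by a non-contractible cycle or a non-disc face localised in that block---and that such obstructions are additive over pairwise disjoint, independent blocks, so that their total is bounded by the Euler genus of $\Sigma$, which is at most $g^{2}-1$. Since there are $g^{2}$ blocks and at most $g^{2}-1$ of them can be dirty, at least one block is clean, which finishes the proof by the previous paragraph.

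I expect the additivity/counting step to be the main obstacle. Making precise that each dirty block consumes at least one unit of Euler genus, and that these units add over the disjoint blocks (so the total is at most the genus), requires the standard but delicate additivity of Euler genus over disjoint embedded pieces, care in the non-orientable case where one-sided curves appear, and a reduction to the $2$-cell case so that faces are discs and the ``clean $\Rightarrow$ disc'' implication is valid. One must also guard against a non-contractible cycle or non-disc face straddling two adjacent blocks, which is exactly what the buffer rows and columns prevent. This is precisely the content of the theorem of Geelen, Richter and Salazar, and I would ultimately invoke \cite{GeelenRS04}, using the block-pigeonhole picture above as the conceptual skeleton.
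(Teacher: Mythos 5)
The paper never proves this proposition: it is imported verbatim, with citation, from Geelen, Richter and Salazar \cite{GeelenRS04} and used as a black box in the proof of Lemma~\ref{lem:lower}, so your sketch has to be judged on its own merits. Its first two steps are fine: the partition into a $g\times g$ array of blocks, and the observation that a \emph{clean} block (every unit cell bounds a disc face) is embedded in a closed disc bounded by its outer cycle --- that gluing argument is correct, because every $4$-cycle of a grid is a unit cell, so the faces around each interior vertex and edge of a clean block are forced into the planar pattern. The fatal problem is the counting step, and it is not merely ``delicate'': it is false. Dirtiness is not an obstruction that consumes Euler genus additively over disjoint blocks, because non-contractible cell $4$-cycles lying in different blocks can be freely homotopic, all running around the \emph{same} handle, and then they jointly cost only one handle's worth of genus.

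Concretely, on the torus (Euler genus $2\le g^{2}-1$ for every $g\ge 2$) one can make \emph{every} block of your fixed partition dirty, for arbitrarily large grids. Fix a column gap and ``roll a ladder around the handle'': re-route the top boundary edge in that gap and the next $k-1$ horizontal edges below it, in the same gap, around the handle as a nested bundle of parallel strands. Each intermediate cell of this chain still bounds a disc face (the thin strip between consecutive strands), but the $4$-cycle of the deepest cell consists of one strand and three normally drawn edges, so it winds once around the torus, is non-contractible, and bounds no disc: its block is dirty. Distinct bundles can be drawn pairwise parallel without crossings, so you may plant one chain ending in each of the $g^{2}$ blocks; then all $g^{2}$ blocks are dirty while the surface is still the torus. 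This does not contradict the proposition --- plenty of $(l\times l)$-subgrids avoiding the thin chains are still flat --- but it contradicts your intermediate claim, which is strictly stronger than the proposition precisely because it insists on a subgrid from a \emph{fixed} partition. Any correct proof must either let the subgrid float or must use the global connectivity of the grid to control homotopic repetitions of the obstruction; that is exactly the nontrivial content of \cite{GeelenRS04}, and since your final move is to invoke that very paper, the attempt is in any case circular as a proof.
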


Now we are ready to prove the following  lower bound.

\begin{lemma}\label{lem:lower}
$\fbn(\H)\geq \varepsilon_H
\cdot\max\{\tw(F_i)\colon i\in\{1,2,\dots,m\}\}$
 for some constant $\varepsilon_H$ depending only on $H$.
\end{lemma}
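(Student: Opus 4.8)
The plan is to establish the bound directly by exhibiting a large $i$-bramble in the incidence graph. Fix the index $i^*$ that maximizes $\tw(F_i)$ and set $t=\tw(F_{i^*})$. By Lemma~\ref{lem:equiv} it suffices to bound $\ifb(I(\H),V(\H),E(\H))$ from below by $\Omega_H(t)$, and the engine for this will be Lemma~\ref{lem:brbr}: if I can show that the incidence graph $i$-contracts (possibly after a legitimate deletion via Lemma~\ref{lem:sub}) to an $i$-labeled graph whose underlying graph is an augmented $(k\times k)$-grid of span $s_H=O_H(1)$ with $k=\Omega_H(t)$, then $\ifb(I(\H),V(\H),E(\H))\geq k/(2s_H^2)-c=\Omega_H(t)$, and the lemma follows.

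First I would reduce $F_{i^*}$ to a genuinely surface-embedded graph. Recall that $F_{i^*}$ is a subgraph of $G_{i^*}-(N_G[E_{i^*}]\cup A_{i^*})$, so its apices are already removed and what remains lies inside the union of the embeddable part $G_0$ and the at most $h$ vortices of $G_{i^*}$. Deleting the vortex-only vertices costs only $O_H(1)$ in treewidth, since each vortex has pathwidth at most $h$ and there are at most $h$ of them (one extends an optimal tree decomposition of the remainder by the vortex path-decomposition bags along the boundary cycles). This leaves a graph $F''$ that embeds in $\Sigma$ with $\tw(F'')\geq t-O_H(1)$. The surface $\Sigma$ has Euler genus bounded by a constant $g=g(H)$, so by the known grid-minor theorem for bounded-genus graphs, $F''$ contains a grid of side $\Omega_H(t)$ as a surface minor. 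Applying Proposition~\ref{prop_Geelen} to this embedded grid yields a subgrid of side still $\Omega_H(t)$ lying in a closed disc $\Delta\subseteq\Sigma$ whose boundary is the boundary cycle of the subgrid; inside $\Delta$ the picture is effectively planar, which is exactly what is needed to control extra adjacencies.

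The decisive step is to realize this grid-in-a-disc as a bounded-span augmented-grid $i$-contraction of the whole incidence graph $G=I(\H)$. The grid is obtained from $F''$ by contracting connected branch sets along edges of $F''$; each such edge is either a true edge of $G$ or a virtual edge, and by the minimality of the clique-sum decomposition together with the defining property of $F_{i^*}$, every virtual edge $\{x,y\}$ used is realizable by a path of $G$ whose internal vertices lie outside $V(F_{i^*})$. Hence the grid is in fact a contraction-minor of $G$, and by Lemmata~\ref{lem:contcon} and~\ref{lem:contr} the $i$-labeling is carried along. The only adjacencies beyond those of the plane grid come from the externally routed realizing paths and from residual vortex attachments, and both are few: Proposition~\ref{prop:clique_sum_restr} limits each clique-sum interface to three non-apex, non-vortex vertices, while the vortices have width at most $h$, so each grid vertex acquires at most $s_H=O_H(1)$ extra attachments to non-marginal vertices. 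This is precisely an augmented $(k\times k)$-grid of span $s_H$ with $k=\Omega_H(t)$, and Lemma~\ref{lem:brbr} followed by Lemma~\ref{lem:equiv} gives $\fbn(\H)\geq\varepsilon_H\cdot t$.

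I expect this last realization step to be the main obstacle. One must simultaneously keep the branch sets of the grid minor confined to the disc so that the planar grid structure survives, certify via the path conditions that each virtual edge contracts to a genuine connection in $G$ without accidentally merging two grid vertices, and bound the span uniformly in $H$ by charging every new adjacency either to a bounded clique-sum interface or to a bounded-width vortex. Extracting a clean constant $s_H$ from this charging argument, while respecting the bipartite $(N,M)=(V(\H),E(\H))$ labeling so that the hypotheses of Lemma~\ref{lem:brbr} hold verbatim, is where the bookkeeping is heaviest.
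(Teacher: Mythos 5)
Your high-level plan shares the paper's skeleton (pass to $G-(N_G[E_{i^*}]\cup A_{i^*})$ via Lemma~\ref{lem:sub}, extract a large grid from the surface part of $F_{i^*}$ using the bounded-genus grid theorem and Proposition~\ref{prop_Geelen}, absorb clique-sum components using minimality and Proposition~\ref{prop:clique_sum_restr}, and finish with a grid-based $i$-bramble lower bound), but you route the endgame through Lemma~\ref{lem:brbr} (augmented grids of bounded span), whereas the paper contracts all the way down to a \emph{planar} graph $Y$ and invokes the planar machinery (grid minor in a planar graph, then Lemma~\ref{lem:fractriangrid}). That substitution is not the problem; the problem is your treatment of the vortices, and it is a genuine gap, not bookkeeping. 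You delete the vortex interiors to get $F''$, find the grid there, and plan to reinsert them when realizing the grid as a contraction of $G$. But $\ifb$ is only monotone under contractions (Lemma~\ref{lem:contr}) and under the \emph{restricted} deletions of Lemma~\ref{lem:sub}; deleting a vortex interior is not legitimate under Lemma~\ref{lem:sub}, because a vortex vertex lying in $M$ typically has neighbors on the boundary cycle $C_j$ that survive the deletion, violating the hypothesis $N_G[v]\subseteq X$. So the vortex vertices must be assigned to branch sets of your contraction model. Now take a vortex consisting of a single vertex present in every bag of its path decomposition (pathwidth $1$), adjacent to every vertex of $C_j$, with the face bounded by $C_j$ lying inside your disc $\Delta$: whichever branch set absorbs this vertex becomes adjacent to every grid branch set meeting $C_j$, which can be $\Omega(k)$ non-marginal vertices. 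Your charging argument (``vortices have width at most $h$'') bounds bag sizes, not the number of grid cells a vortex touches, so no constant $s_H$ comes out and Lemma~\ref{lem:brbr} does not apply. The paper sidesteps exactly this by reversing the order of operations: each vortex is contracted \emph{together with its boundary face cycle} to a single vertex before any grid is extracted, so vortices simply cannot interact with the grid structure.

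Two smaller points. First, your claim that deleting vortex-only vertices ``costs only $O_H(1)$ in treewidth'' additively is not supported by the argument you sketch: inserting the vortex path-decomposition bags into a tree decomposition of the remainder multiplies the width by a factor depending on $h$ rather than adding a constant (a multiplicative loss would suffice here, but the statement as written is wrong). Second, the phrase ``the grid is in fact a contraction-minor of $G$'' conflates minors with contractions; since $\ifb$ is not minor-monotone, every vertex of $G-(N_G[E_{i^*}]\cup A_{i^*})$ --- the external summands, the surface part outside $\Delta$, the unused vertices inside $\Delta$, and the vortices --- must be explicitly placed into branch sets with controlled adjacencies. This absorption is the actual content of the paper's proof (contracting external components to stars, contracting everything outside the disc to a single vertex, embedding or contracting the stars at their at-most-three attachment points), and with the vortex issue above it is precisely where your proposal, as written, fails.
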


\begin{proof}
We assume that $G-(N[E_i]\cup A_i)$ is a connected graph which has
at least one edge. (Otherwise one can consider the
components of this graph separately and remove isolated vertices.)
The main idea of the proof is to contract it to a planar graph with approximately
the same treewidth as $F_i$ and then apply same techniques that were used in the
previous section for the planar case.

\paragraph{Structure of $G-(N[E_i]\cup A_i)$.}
Let us note that  an $h$-clique-sum decomposition $G=G_1\oplus
G_2\oplus\dots\oplus G_m$ induces an $h$-clique-sum decomposition
of $G^\prime=G-(N[E_i]\cup A_i)$ with the  summand $G_i$ replaced by
$F_i$.
 Let
$G^\prime_1,G^\prime_2,\dots,G^\prime_l$ be the connected components of
$G^\prime-V(F_i)$. Every such component $G^\prime_j$ is attached
via clique-sum to $F_i$ by  some clique $Q_j$ of $F_i$.
Note that cliques $Q_j$ contain all
virtual edges of $F_i$. We assume that each
clique $Q_j$ does not separate vertices of $F_i$.
Otherwise, it is possible to decompose
$F_i$ into the clique-sum of  graphs $F^{(1)}_i\oplus F^{(2)}_i$
with the join $Q_j$ and prove the bound for summands and, since
$\tw(F_i)=\max\{F^{(1)}_i,F^{(2)}_i\}$, that will prove the lemma. To simplify the structure of the graph,
for every component $G^\prime_j$, we contract all its edges
and denote by $S_j$ the star whose central vertex is the result of the
contraction and leaves are the vertices of $Q_j$.

\paragraph{Contracting vortices.}
The $h$-nearly embedding of the graph $G_i$ induces the $h$-nearly
embedding of $F_i=X_0\cup X_1\cup\dots\cup X_h$ without apices. Here we
assume that  $X_0$ is embedded in a surface $\Sigma$ of genus depending on $H$
and $X_1,X_2,\dots,X_h$ are the
vortices. For every vortex $X_j$, the vertices $V(X_0)\cap V(X_j)$ are on the boundary $C_j$ of some face of $X_0$. If for a star $S_k$ some of its leaves
 $Q_k$ are in  $X_j$ or $C_j$, we do the
following operation:
if $Q_k\cap (V(X_j)-V(C_j))\neq\emptyset$
then all edges of $S_k$ are contracted, and if $Q_k\cap
(V(X_j)-V(C_j))=\emptyset$ but $|Q_k\cap V(C_j)|\geq 2$,  then we contract all
edges of $S_k$ that are incident to the vertices of $Q_k\cap V(C_j)$. These contractions results in the contraction of some
edges of $F_i$. Particularly, all virtual edges of $X_j$ and $C_j$ are contracted.  Additionally, we contract all remaining edges of
$X_j$ and $C_j$. We perform theses contractions  for all vortices of $F_i$ and denote the result by $F^\prime_i$. It follows
immediately from the definition of the $h$-clique-sum and
Proposition~\ref{prop:clique_sum_restr}, that $F^\prime_i$
coincides with the graph obtained from $F_i$ by contractions of
all vortices $X_j$ and boundaries of faces $C_j$.
It can be easily
seen that $F^\prime_i$ is embedded in $\Sigma$. It is known (see
e.g. \cite{DemaineFHT05jacm,DemaineH05II}) that there is a
positive constant $a_H$ which depends only on $H$ such that
$\tw(F^\prime_i)\geq a_H\cdot\tw(F_i)$.

\paragraph{Contracting the part that lies outside of some planar disc.}
Since $F^\prime_i$ is embedded in $\Sigma$, we have that the graph $F^\prime_i$  contains some $(k\times k)$-grid as a surface minor, where $k\geq b_H\cdot\tw(F^\prime_i)$ for some constant $b_H$ \cite{DemaineFHT05jacm}. Combining this result with Proposition~\ref{prop_Geelen}, we receive the following claim. There is
a disc $\Delta\subseteq\Sigma$ such that
i) the subgraph $R$ of $F^\prime_i$ induced by vertices of $F^\prime_i\cap \Delta$ is a connected graph;
 ii) the subgraph $R^\prime$ of $F^\prime_i$ induced by $N_{F^\prime_i}[V(R)]$ is completely in some disc $\Delta^\prime$;
iii) vertices of $V(R^\prime)-V(R)$ induce a cycle $C$ which is the border of $\Delta^\prime$,
and
iv) $\tw(R)\geq c_H\cdot\tw(F^\prime_i)$ for some constant $c_H$.
Now we treat the part of $F^\prime_i$ which is outside
$\Delta$ exactly the same way we have treated vortices.
For stars $S_k$ intersecting  $V(F^\prime_i)-V(R^\prime)$ or $C$, we do the following:
if $Q_k\cap (V(F^\prime_i)-V(R^\prime))\neq\emptyset$, then all edges of $S_k$ are contracted, and if $Q_k\cap (V(F^\prime_i)-V(R^\prime))=\emptyset$ but
$|Q_k\cap V(C)|\geq 2$, then all edges of $S_k$ incident to the vertices of $Q_k\cap V(C)$ are contracted. These contractions result in the contraction of some edges of $F^\prime_i$ with endpoints  on $C$ or outside $\Delta^\prime$. Particularly, all such virtual edges are contracted.  Additionally, we contract all remaining edges of $F^\prime_i-V(R)$ and $C$.
Thus this part of the graph is contracted to a single vertex.
Denote the obtained graph $X$. This graph is planar, and since $R$ is a subgraph of $X$, we have that $\tw(X)\geq\tw(R)$.

\paragraph{Embedding the stars.}
Some edges of $X$ are virtual, and all such edges are  in cliques $Q_j$.
By Proposition~\ref{prop:clique_sum_restr},   $|Q_j|\leq3$.
For every clique $Q=V(X)\cap Q_j$, we do the following. If $Q=\{x,y\}$, then the edge of the star $S_j$ incident to $x$ is contracted. If $Q=\{x,y,z\}$, then if two vertices of $Q$, say $x$ and $y$, are joined by an edge in $G$, then the edge of $S_j$ incident to $z$ is contracted, and if there are no such edges and the
triangle induced by $\{x,y,z\}$ is the boundary of some face of $X$, then we add a new vertex on this face, join it with $x$, $y$ and $z$ (it can be seen as  $S_j$ embedded in this face, and since our graph is $i$-labeled, it is assumed that this new vertex has same labels as the central vertex of $S_j$), and then remove virtual edges.
Note that if the triangle is not a boundary of some face, then $Q$ is a separator of our graph, but we assumed that there are no such separators. Denote by $Y$ the obtained graph.
Similar construction was used in the proof of the main theorem in \cite{DemaineH05II}, and
by the same arguments as were used by Demaine et al.
we immediately conclude that there is a positive constant $d_H$ such that $\tw(X)\geq d_H\cdot \tw(Y)$.

Now all contractions are finished.
Note that the graph $Y$ is a planar graph which is
a contraction of $G^\prime=G-(N[E_i]\cup A_i)$. Also there is some positive constant $e_H$ which depends only on $H$ such that $\tw(Y)\geq e_H\cdot\tw(F_i)$.
Recall that we consider the $i$-labeled graph $(G,V(\H),E(\H))$. By Lemma~\ref{lem:equiv},
$\fbn(\H)=\ifb(G,V(\H),E(\H))$. Because the sets $V(\H)$ and $E(\H)$ are independent, by Lemma~\ref{lem:sub},  we have that $\ifb(G,V(\H),E(\H))\geq\ifb(G^\prime,N,M)$, where $N=V(\H)- (N[E_i]\cup A_i)$ and $M=E(\H)-(N[E_i]\cup A_i)$.
By Lemma~\ref{lem:contr}, $\ifb(G^\prime,N,M)\geq\ifb(Y,N^\prime,M^\prime)$, where
$N^\prime$ and $M^\prime$ are sets which were obtained as the result of contractions of $N$ and $M$. Finally,
as in Theorem~\ref{thm:planar}, one can show  that $\ifb(Y,N^\prime,M^\prime)\geq f_H\cdot \tw(Y)$ for some constant $f_H$.
By putting  all these bounds together, we prove that
there is a positive constant $\varepsilon_H$ which depends only on $H$,  such that $\fbn(\H)\geq \varepsilon_H\cdot\tw(F_i)$.
\end{proof}

Combining Lemmata~\ref{lem:tw_bound}, \ref{lem:lowerb},  \ref{lem:upper}, and \ref{lem:lower},
we obtain the following theorem.

\begin{theorem}\label{thm:appr}
$(1/c_H)\cdot w \leq \fhw(\H)\leq \ghw(\H)\leq c_H \cdot w$, where
$
w=\max\{\tw(F_i)\colon i\in\{1,2,\dots,m\}\}$, and $c_H$ is a constant depending
 only on $H$.
\end{theorem}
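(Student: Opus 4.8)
The statement is a direct assembly of the four cited lemmata, so the plan is to chain their inequalities and then tidy up the constants, handling the left and right bounds separately.

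For the lower bound on $\fhw(\H)$ (equivalently, the upper bound on $w$), I would combine Lemma~\ref{lem:lowerb} with Lemma~\ref{lem:lower}. Lemma~\ref{lem:lower} gives $\fbn(\H)\geq\varepsilon_H\cdot w$, while Lemma~\ref{lem:lowerb} gives $\fbn(\H)\leq\fhw(\H)$; chaining these yields $\fhw(\H)\geq\varepsilon_H\cdot w$. Since $\fhw(\H)\leq\ghw(\H)$ holds by definition (a binary labelling is a special labelling, as noted before Lemma~\ref{lem:tw_bound}), this already establishes $(1/c_H)\cdot w\leq\fhw(\H)\leq\ghw(\H)$ as soon as we take $c_H\geq 1/\varepsilon_H$.

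For the upper bound $\ghw(\H)\leq c_H\cdot w$, I would invoke Lemma~\ref{lem:upper}, which gives $\ghw(\H)\leq 3w+6h+4$. By the Excluded Minor Theorem (Theorem~\ref{structurethm}) the clique-sum width $h$ depends only on the size of $H$, so $6h+4$ is a constant $C_H$ depending only on $H$. The one genuine point of care is converting the \emph{additive} term $C_H$ into a \emph{multiplicative} one: whenever $w\geq 1$ we have $3w+C_H\leq (3+C_H)\cdot w$, so the bound follows with $c_H\geq 3+C_H$. The degenerate regime where $w$ falls below this threshold is harmless, since there $\ghw(\H)$ is itself bounded by a constant depending only on $H$ (again via Lemma~\ref{lem:upper}), and this constant can be folded into $c_H$.

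Finally, I would set $c_H=\max\{1/\varepsilon_H,\ 3+6h+4\}$, which depends only on $H$, and splice the two directions together with the trivial $\fhw(\H)\leq\ghw(\H)$ to obtain the displayed chain. The substance of the whole result lies upstream, in Lemma~\ref{lem:lower}, whose proof contracts $G-(N[E_i]\cup A_i)$ down to a planar graph of comparable treewidth and then reuses the triangulated-grid bramble estimate from Theorem~\ref{thm:planar}; the present theorem is essentially bookkeeping, and the only thing to watch is that every constant introduced along the way ($\varepsilon_H$, $h$, and the additive slack $6h+4$) depends only on $H$, so that the final $c_H$ does as well.
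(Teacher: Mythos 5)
Your proposal is essentially the paper's own proof: the paper establishes the theorem in a single line by combining Lemmata~\ref{lem:tw_bound}, \ref{lem:lowerb}, \ref{lem:upper}, and \ref{lem:lower}, which is exactly the chaining you carry out ($\fbn(\H)\geq\varepsilon_H\cdot w$ plus $\fbn\leq\fhw$ for the lower bound, $\fhw\leq\ghw$ trivially, and Lemma~\ref{lem:upper} for the upper bound), with the conversion of the additive term $6h+4$ into a multiplicative constant left implicit. The one wrinkle you flag---that when $w=0$ no choice of $c_H$ can absorb the additive slack---is a degenerate case the paper's statement silently ignores as well, so your write-up is, if anything, more careful than the original.
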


\noindent{\bf Remark.}
Notice that, by Theorem~\ref{thm:appr},
the generalized hypertree width and the fractional hypertree width
of a  hypergraph with $H$-minor-free incidence graph may differ within a  multiplicative
constant factor.  We stress that, as observed in  \cite{GroheM06},
this is not the case for general hypergraphs.

Demaine et al. \cite{DemaineHK05} (see also \cite{DawarGK07,RobertsonS03})
described an algorithm which constructs
a clique-sum decomposition of an $H$-minor-free graph $G$ on $n$ vertices
with the
running time $n^{O(1)}$ (the hidden constant in the running time depends
only on $H$).
As far as we constructed summands $G_i$, the construction of graphs $F_i$
can be done in polynomial time. 
Moreover, since the algorithm  of Demaine et al. provides
$h$-nearly embeddings
of these graphs, it is possible to use it to construct a polynomial
 constant factor approximation
algorithm for the computation of $\tw(F_i)$.
This provides  us with the main algorithmic result of this section.

\begin{theorem}\label{thm:alg}
For any fixed graph $H$, there is a polynomial time
$c_H$-approximation algorithm computing  the generalized
hypertree width and the fractional hypertree width
for hypergraphs with $H$-minor-free incidence graphs, where the constant
$c_H$ depends only on $H$.
\end{theorem}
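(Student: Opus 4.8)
The plan is to assemble Theorem~\ref{thm:alg} from the combinatorial sandwich already established in Theorem~\ref{thm:appr} together with the algorithmic availability of all the objects that enter the definition of $w=\max\{\tw(F_i)\colon i\in\{1,2,\dots,m\}\}$. The point is that Theorem~\ref{thm:appr} reduces the approximation problem to \emph{computing} (approximately) the quantity $w$, since any value lying within a constant factor of $w$ automatically lies within a constant factor of $\fhw(\H)$ and $\ghw(\H)$. So the algorithm never needs to reason about fractional or generalized hypertree decompositions directly: it only needs to produce a number that is within a constant factor of $w$, and Theorem~\ref{thm:appr} converts that number into a constant-factor estimate of both width parameters.

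First I would invoke the algorithm of Demaine et al.~\cite{DemaineHK05} to compute, in time $n^{O(1)}$ with the hidden constant depending only on $H$, a clique-sum decomposition $G=G_1\oplus G_2\oplus\dots\oplus G_m$ of the incidence graph $G=I(\H)$, together with the associated $h$-nearly embeddings of each summand $G_i$ in the appropriate surface. Second, from each summand $G_i$ I would construct the graph $F_i$ exactly as defined preceding Lemma~\ref{lem:upper}: this requires identifying the apex set $A_i$, forming $G_i^\prime=G_i-(N_G[E_i]\cup A_i)$, and deleting those virtual edges $\{x,y\}$ of $G_i^\prime$ for which no suitable connecting path exists in $G-(N[E_i]\cup A_i)$. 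Each of these steps is a polynomial-time graph manipulation given the decomposition and the embeddings, so the collection $\{F_i\}$ is produced in polynomial time. Third, because the embeddings of the $F_i$ are available and each $F_i$ embeds in a fixed surface (up to the apices and vortices already stripped away), I would apply the constant-factor treewidth approximation for bounded-genus (more generally $H$-minor-free) graphs of Feige et al.~\cite{FeigeHajLee05} to each $F_i$, obtaining a value $\tilde w_i$ with $\tw(F_i)\le \tilde w_i\le \alpha_H\cdot\tw(F_i)$, and then set $\tilde w=\max_i \tilde w_i$, which satisfies $w\le\tilde w\le\alpha_H\cdot w$.

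Finally I would combine this with Theorem~\ref{thm:appr}: since $(1/c_H)\cdot w\le\fhw(\H)\le\ghw(\H)\le c_H\cdot w$ and $w\le\tilde w\le\alpha_H\cdot w$, the computed value $\tilde w$ satisfies $(1/(\alpha_H c_H))\cdot\tilde w\le\fhw(\H)\le\ghw(\H)\le c_H\cdot\tilde w$, so outputting $\tilde w$ (rescaled by an appropriate constant depending only on $H$) gives a $c_H^\prime$-approximation for both parameters, with $c_H^\prime$ depending only on $H$. The main obstacle I anticipate is the third step: one must verify that the treewidth of each $F_i$ is genuinely approximable in polynomial time, which hinges on $F_i$ belonging to an $H$-minor-free (indeed essentially bounded-genus plus bounded-size apex/vortex structure) class for which the Feige--Hajiaghayi--Lee machinery applies, and on the embeddings delivered by~\cite{DemaineHK05} being explicit enough to feed into that approximation. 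The remaining steps are bookkeeping: confirming that the constants compose multiplicatively and that the construction of $F_i$ from $G_i$ is indeed polynomial, both of which are already flagged as routine in the preceding remark.
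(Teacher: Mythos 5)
Your proposal is correct and follows essentially the same route as the paper: compute the clique-sum decomposition together with its $h$-nearly embeddings using the algorithm of Demaine et al.~\cite{DemaineHK05}, construct the graphs $F_i$ in polynomial time, approximate each $\tw(F_i)$ within a constant factor, and convert the resulting estimate of $w$ into an approximation of $\fhw(\H)$ and $\ghw(\H)$ via Theorem~\ref{thm:appr}. The only (immaterial) difference is the black box in the third step: the paper approximates $\tw(F_i)$ directly from the $h$-nearly embeddings supplied by the decomposition algorithm, whereas you invoke the Feige--Hajiaghayi--Lee approximation, which is also legitimate --- not because $F_i$ is literally $H$-minor-free (it need not be, since its surviving virtual edges may not be realizable as a minor of $G$), but because $F_i$ is a subgraph of an $h$-nearly embeddable graph with its apices removed and hence excludes a fixed clique minor whose size depends only on $H$, exactly the structural point your parenthetical hedge anticipates.
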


Let us remark  that while the winning
strategy for marshals used in the proof of Lemma~\ref{lem:upper} is not
monotone (a strategy is \emph{monotone} if the territory available
for the robber only decreases in the game), but it can be turned into
 monotone by choosing  marshals' positions in
a slightly more careful way. By making use of the results from
\cite{GottlobLS03}, the monotone
strategy  can be used  to construct a generalized
hypertree decomposition (or fractional hypertree decomposition).
Thus our results can be used not only to compute but to construct,
up to constant multiplicative-factor, the corresponding decompositions.



\begin{thebibliography}{10}

\bibitem{AdlerGG07}
{\sc I.~Adler, G.~Gottlob, and M.~Grohe}, {\em Hypertree width and related
  hypergraph invariants}, European J. Combin., 28 (2007), pp.~2167--2181.

\bibitem{Bodlaender98}
{\sc H.~L. Bodlaender}, {\em A partial {$k$}-arboretum of graphs with bounded
  treewidth}, Theoret. Comput. Sci., 209 (1998), pp.~1--45.

\bibitem{ChandraM77}
{\sc A.~K. Chandra and P.~M. Merlin}, {\em Optimal implementation of
  conjunctive queries in relational data bases}, in STOC'77, ACM, 1977,
  pp.~77--90.

\bibitem{ChekuriR00}
{\sc C.~Chekuri and A.~Rajaraman}, {\em Conjunctive query containment
  revisited}, Theoret. Comput. Sci., 239 (2000), pp.~211--229.

\bibitem{DawarGK07}
{\sc A.~Dawar, M.~Grohe, and S.~Kreutzer}, {\em Locally excluding a minor}, in
  LICS'07, IEEE Computer Society, 2007, pp.~270--279.

\bibitem{DemaineFH05}
{\sc E.~D. Demaine, F.~V. Fomin, M.~Hajiaghayi, and D.~M. Thilikos}, {\em
  Bidimensional parameters and local treewidth}, SIAM J. Discrete Math., 18
  (2004/05), pp.~501--511.

\bibitem{DemaineFHT05jacm}
\leavevmode\vrule height 2pt depth -1.6pt width 23pt, {\em Subexponential
  parameterized algorithms on graphs of bounded genus and {$H$}-minor-free
  graphs}, J. ACM, 52 (2005), pp.~866--893.

\bibitem{DemaineH05II}
{\sc E.~D. Demaine and M.~Hajiaghayi}, {\em Graphs excluding a fixed minor have
  grids as large as treewidth, with combinatorial and algorithmic applications
  through bidimensionality}, in SODA'05, ACM, 2005, pp.~682--689.

\bibitem{DemaineHK05}
{\sc E.~D. Demaine, M.~T. Hajiaghayi, and K.~ichi Kawarabayashi}, {\em
  Algorithmic graph minor theory: Decomposition, approximation, and coloring},
  in FOCS'05, IEEE Computer Society, 2005, pp.~637--646.

\bibitem{FeigeHajLee05}
{\sc U.~Feige, M.~Hajiaghayi, and J.~R. Lee}, {\em Improved approximation
  algorithms for minimum weight vertex separators}, SIAM J. Computing, 38
  (2008), pp.~629--657.

\bibitem{GeelenRS04}
{\sc J.~F. Geelen, R.~B. Richter, and G.~Salazar}, {\em Embedding grids in
  surfaces}, European J. Combin., 25 (2004), pp.~785--792.

\bibitem{GottlobGMSS05}
{\sc G.~Gottlob, M.~Grohe, N.~Musliu, M.~Samer, and F.~Scarcello}, {\em
  Hypertree decompositions: Structure, algorithms, and applications}, in WG'05,
  vol.~3787 of Lecture Notes in Computer Science, Springer, 2005, pp.~1--15.

\bibitem{GottlobLS00}
{\sc G.~Gottlob, N.~Leone, and F.~Scarcello}, {\em A comparison of structural
  {CSP} decomposition methods}, Artificial Intelligence, 124 (2000),
  pp.~243--282.

\bibitem{GottlobLS01}
\leavevmode\vrule height 2pt depth -1.6pt width 23pt, {\em The complexity of
  acyclic conjunctive queries}, J. ACM, 48 (2001), pp.~431--498.

\bibitem{GottlobLS02}
\leavevmode\vrule height 2pt depth -1.6pt width 23pt, {\em Hypertree
  decompositions and tractable queries}, J. Comput. System Sci., 64 (2002),
  pp.~579--627.

\bibitem{GottlobLS03}
\leavevmode\vrule height 2pt depth -1.6pt width 23pt, {\em Robbers, marshals,
  and guards: game theoretic and logical characterizations of hypertree width},
  J. Comput. System Sci., 66 (2003), pp.~775--808.

\bibitem{GottlobMS07}
{\sc G.~Gottlob, Z.~Mikl{\'o}s, and T.~Schwentick}, {\em Generalized hypertree
  decompositions: {NP}-hardness and tractable variants}, in PODS'07, ACM, 2007,
  pp.~13--22.

\bibitem{GroheM06}
{\sc M.~Grohe and D.~Marx}, {\em Constraint solving via fractional edge
  covers}, in SODA'06, ACM, 2006, pp.~289--298.

\bibitem{KolaitisV00}
{\sc P.~G. Kolaitis and M.~Y. Vardi}, {\em Conjunctive-query containment and
  constraint satisfaction}, J. Comput. System Sci., 61 (2000), pp.~302--332.

\bibitem{Mohar92}
{\sc B.~Mohar}, {\em Combinatorial local planarity and the width of graph
  embeddings}, Canad. J. Math., 44 (1992), pp.~1272--1288.

\bibitem{MoharT01}
{\sc B.~Mohar and C.~Thomassen}, {\em Graphs on surfaces}, Johns Hopkins
  University Press, Baltimore, MD, 2001.

\bibitem{RobertsonST94}
{\sc N.~Robertson, P.~Seymour, and R.~Thomas}, {\em Quickly excluding a planar
  graph}, J. Combin. Theory Ser. B, 62 (1994), pp.~323--348.

\bibitem{RobertsonS86}
{\sc N.~Robertson and P.~D. Seymour}, {\em Graph minors. {II}. {A}lgorithmic
  aspects of tree-width}, J. Algorithms, 7 (1986), pp.~309--322.

\bibitem{RobertsonS03}
\leavevmode\vrule height 2pt depth -1.6pt width 23pt, {\em Graph minors. {XVI}.
  {E}xcluding a non-planar graph}, J. Combin. Theory Ser. B, 89 (2003),
  pp.~43--76.

\bibitem{Thomassen97}
{\sc C.~Thomassen}, {\em A simpler proof of the excluded minor theorem for
  higher surfaces}, J. Combin. Theory Ser. B, 70 (1997), pp.~306--311.

\bibitem{Yannakakis81}
{\sc M.~Yannakakis}, {\em Algorithms for acyclic database schemes}, in VLDB'81,
  IEEE Computer Society, 1981, pp.~82--94.

\end{thebibliography}

\end{document}